\theoremstyle{plain}
\newtheorem{theorem}{Theorem}
\newtheorem{lemma}[theorem]{Lemma}
\newtheorem{proposition}[theorem]{Proposition}
\theoremstyle{definition}
\newtheorem{definition}[theorem]{Definition}
\theoremstyle{remark}
\pgfplotsset{compat=1.13}
\newcommand{\N}{\mathbb{N}}
\newcommand{\R}{\mathbb{R}}
\newcommand{\Prob}{\mathbb{P}}
\newcommand{\Pmud}{\mathcal{P}(\mu,d)}
\newcommand{\Ptwomud}{\mathcal{P}_2(\mu,d)}
\newcommand{\REV}{\text{REV}}
\newcommand{\1}{\mathds{1}}
\newcommand{\bP}{\mathbb{P}}
\newcommand{\cP}{\mathcal{P}}
\definecolor{forestgreen}{rgb}{0.13, 0.55, 0.13}
\numberwithin{theorem}{section}
\numberwithin{equation}{section}
\begin{document}
\title{
Robust Optimality of Bundling Goods \\ Beyond Finite Variance
}
\author{
Tim S.\,G.\ van Eck\textsuperscript{1},
Pieter Kleer\textsuperscript{1},
and Johan S.\,H.\ van Leeuwaarden\textsuperscript{1} \\
\\[-2pt]
\textsuperscript{1} Tilburg University, Tilburg, The Netherlands \\
\texttt{$\{$t.s.g.vaneck,p.s.kleer,j.s.h.vanleeuwaarden$\}$@tilburguniversity.edu}
}

	\maketitle

\thispagestyle{empty}

\begin{abstract}
    Buyer valuations for a collection of goods may exhibit heavy tails, while sellers often know only limited information about their distribution. In such settings, full distributional assumptions or reliance on second moments may be inappropriate. We assume independence across goods and knowledge of the mean and mean absolute deviation (MAD) of valuations, which keeps the model well defined even under heavy-tailed uncertainty. Within this distributionally robust framework, the seller chooses a selling mechanism to maximize worst-case expected revenue. We show that, in the many-goods limit, no selling mechanism can guarantee extraction of the full mean revenue per good. This contrasts sharply with the finite-variance case, where deterministically selling all goods as a single bundle achieves full mean extraction asymptotically. Despite this shortfall, we show that deterministic bundling remains optimal. Moreover, we show that this bundling price is also optimal when restricting attention to deterministic bundling mechanisms under the competitive ratio and absolute regret objectives, leading to a simple and universal pricing rule.
\end{abstract}

\newpage

\setcounter{page}{1}
\section{Introduction}
In many markets for digital goods firms increasingly rely on bundling rather than selling goods exclusively on an à-la-carte basis. Platforms that historically emphasized individual purchases now often offer access to large collections of goods through bundled pricing schemes, such as subscription-based access to movies, music, or video games. As a result, understanding bundling as a selling mechanism has become an important topic in economics and mechanism design, and a substantial literature has studied its revenue properties and conditions for optimality.
A common feature of these markets is the presence of large catalogs of goods combined with negligible marginal costs of supply. While bundling is already economically interesting with only a few goods, the size of modern catalogs naturally raises the question of how revenue guarantees scale with the number of goods. We therefore study a single-buyer, multi-good revenue maximization problem in the many-goods limit, which allows us to characterize how revenue guarantees scale with the number of goods.

Within this many-goods limiting setting, \cite{bakos1999bundling} show that when the distributions of the goods are i.i.d., fully known, and do not depend on the number of goods, deterministic bundling becomes optimal. This also means that randomized mechanisms cannot improve the expected revenue in this environment, which they typically can in other settings. Using the law of large numbers together with Chebyshev's inequality, they establish that the seller's expected revenue per good converges to the mean, which serves as an upper bound for any truthful selling mechanism. Truthfulness refers to the fact that the buyer should have no incentive to lie about the valuation they have for the good. By suitably truncating the distribution, they further extend this result to heavy-tailed distributions with infinite variance.

The result of \cite{bakos1999bundling} shows the strength of deterministic bundling when full information on the valuation distributions is available. However, full information is often \textit{not} available or unreliable and therefore deemed unrealistic in practice. Adding to this is what is commonly known as Wilson's doctrine \citep{Wilson1987, Milgrom2004}: We should aim at designing mechanisms that only require minimal assumptions about the value distribution of the buyer. This leads to the notion of \textit{distributionally robust revenue maximization}, in which the seller only has \textit{partial knowledge} about the value distributions of the buyer, often in the form of summary statistics such as the mean and the variance (see, e.g., \cite{scarf1958min}). We therefore adopt a distributionally robust approach, assuming independent valuations and partial knowledge of the mean and the mean absolute deviation (MAD).

By focusing on the MAD rather than the variance, we depart from much of the existing literature, which typically centers on partial information that ensures finite variance. While this is analytically convenient, this excludes distributions with heavy tails that have infinite variance. There is ample empirical evidence for heavy-tailed valuations and demand, also spurred by ``the long-tail phenomenon'' seen in online retail, where niche products often determine a large share of the total sales \citep{das2021heavy}. Examples of reported heavy tails include demand for books \citep{gaffeo2008demand}, movies \citep{bimpikis2016inventory}, spare parts \citep{natarajan2018asymmetry}, and many more \citep{clauset2009power}, where the tail behavior frequently yields a finite mean but infinite variance. Such heavy-tailed behavior can substantially influence revenue outcomes when only partial distributional information is known.

Our first contribution shows that, in the many-goods limit, \textit{all} truthful mechanisms achieve expected revenue per good that is strictly below the mean. This stands in stark contrast to the mean-variance setting \citep{giannakopoulos2023ratio}, where a concentration argument (Chebyshev's inequality) shows that deterministic bundling yields the full mean, thereby capturing the full market potential similar to the full-information model. Such arguments break down entirely when the variance may be infinite. We then show, as our second contribution, that deterministic bundling remains optimal within this setting despite not capturing the full market potential. Hence, deterministic bundling remains a powerful mechanism when distributional ambiguity and heavy-tails interact.

We also consider deterministic bundling for the competitive ratio and absolute regret objectives, which compare the expected revenue with the optimal expected revenue under full information. The competitive ratio evaluates this comparison in relative terms, whereas absolute regret evaluates it in absolute terms. As our third contribution, we show that, among all deterministic bundling prices, the same price is optimal for expected revenue, competitive ratio, and absolute regret in the many-goods limit. An important motivation for us to consider these alternative objectives is a recent study by \cite{anunrojwong2024best}, which investigates robust decision making across all of the aforementioned criteria for the single-good setting. Their central insight is that mechanisms optimized for one objective often perform poorly under another, revealing a form of objective overfitting. They propose a unified approach that directly optimizes for performance across all objectives simultaneously. Our results contribute to this line of inquiry by offering a first step toward understanding objective alignment in a canonical high-dimensional setting.

From a technical perspective, establishing our results requires addressing optimization problems that, because the goods are independent, are much harder than its correlated counterparts. Under correlated valuations, one can resort to well-known primal-dual semi-infinite linear programming techniques or computing a saddle point of a zero-sum game. Independence, from an optimization point of view, results in non-linear (and non-convex) problems. A fitting analogy is comparing the difference in computational complexity between computing a mixed Nash equilibrium (difficult) and a correlated equilibrium in general finite games (tractable via a linear program). Independence thus introduces a substantive layer of complexity into the analysis.

\subsection{Further related work}
In this section, we discuss related works, starting with the classical Bayesian setting. We then continue with robust mechanisms for selling multiple goods in other settings, as well as the literature that addresses the case when there are multiple buyers. Finally, we discuss related works for robustly selling a single good. These works demonstrate the growing interest in obtaining distributionally robust solutions to classical mechanism design problems, to which our work contributes.\vspace{0.6em}

\label{sec:related}
\noindent \textit{Mechanisms for selling multiple goods in the full-information setting.}
In the case of selling $m=1$ good with a given valuation distribution, the optimal mechanism is a posted "take-it-or-leave-it" price \citep{myerson1981optimal,riley1981optimal,riley1983optimal} which sells the good if and only if the buyer's (random) valuation for the good exceeds this price. When selling $m>1$ goods, the optimal mechanism may exhibit a much more complex structure \citep{daskalakis2017strong,giannakopoulos2018duality}. Natural choices like bundling all goods together or selling them separately are in general not optimal for any distribution. Furthermore, the optimal mechanism need not be deterministic, as opposed to the single-good case. These observations are nicely summarized by \cite[Examples 1-4]{hart2017approximate}. In the many-goods setting with $m\xrightarrow{} \infty$, \cite{bakos1999bundling} show that bundling is optimal as long as the valuation distribution does not depend on $m$. When the distribution is allowed to depend on $m$, bundling is no longer guaranteed to be optimal. Instead, \cite{hart2017approximate} show that bundling achieves a $c/\log(m)$-fraction of the revenue of the optimal mechanism for some constant $c>0$. Later, \cite{li2013revenue} improved this result to show that bundling guarantees a constant fraction of the revenue of the (unknown) optimal mechanism. For finite $m$, \cite{daskalakis2017strong} provide a primal-dual framework that allows them to characterize the optimal mechanism for selling a fixed number of $m \geq 2$ goods, thereby also generalizing earlier works such as \cite{manelli2006bundling,cai2012algorithmic}; see also \cite{cai2019duality} for a more general framework. \cite{daskalakis2017strong} also provide a characterization for when bundling is optimal. As a concrete application of this characterization, they consider i.i.d. goods with uniform distribution on $[c,c+1]$ for some scalar $c$. They show that if $m$ is kept fixed, then for $c$ large enough the bundling mechanism is optimal (this was shown by \cite{pavlov2011optimal} for $m = 2$). However, for $c$ fixed, it is shown that for $m$ large enough, the bundling mechanism is \emph{not} optimal. 
In general, there is not a simple economic interpretation of the characterization in \cite{daskalakis2017strong}.
There have also been other works concerned with deriving conditions on the known independently distributed goods for when bundling is the optimal mechanism, such as \cite{haghpanah2021pure,pavlov2011optimal,daskalakis2017strong,menicucci2015optimality,manelli2006bundling,ibragimov2010optimal}. See \cite[Footnote 1]{che2021multiplegoods} for an overview. 

\vspace{0.6em}
\noindent \textit{Robust mechanisms for selling multiple goods.} \cite{carroll2017robustness} introduces a robust perspective for selling multiple goods. It is shown that when correlations between the goods are allowed, the optimal robust mechanism is to sell every good separately. See also \cite{carroll2019robustness} for an interesting survey. \cite{che2021multiplegoods} generalize Carroll's work by introducing a partition structure. They then assume to know the mean of every good, and some dispersion information about every bundle (not every individual good within a bundle) in the partition, but not the possible correlations between the goods in different parts of the partition. \cite{deb2021multi} consider a robust setting related to ours, but instead of moment information assume to have only support information for the unknown distributions, i.e., they only assume that the distributions have positive probability mass on a given interval. They prove that randomized bundling is the optimal mechanism for selling $m$ goods under an exchangeability assumption (that also applies to independent distributions); see also \cite{che2021multiplegoods} for related results in this direction. \cite{giannakopoulos2023ratio} consider the robust ratio objective and provide various tight, up to constants, results for selling multiple goods under mean-variance ambiguity for both correlated and independent value distributions. See also \cite{chen2022distribution} for results in this direction. There are also works that study (versions of) revenue maximization with respect to distance-based ambiguity sets, such as \cite{brustle2020} who use the Prokhorov and L\'evy distances in learning-based settings.
\medskip

\noindent \textit{Robust mechanisms for multiple buyers.} There are also various papers that study the setting with multiple buyers where there are unknown distributions coming from (moment-based) ambiguity sets. \cite{anunrojwong2022robustness} study the problem of selling one good to $n$ potential buyers, where the unknown (joint) distribution is only assumed to have bounded support, meaning an upper bound on the valuations is known. \cite{bachrach2022distributional} consider a similar model for selling a single good to $n$ i.i.d. (potential) buyers, but in addition assume to know the mean of the common distribution. In both these works (in \cite{bachrach2022distributional} only for $n = 2$), a robust version of the second price auction with (randomized) reserve price arises as the optimal robust solution, which is proved by using a saddle-point argument for the corresponding robust zero-sum game between the seller and Nature. For other works in the multi-buyer setting, see, e.g., \cite{koccyiugit2020distributionally,allouah2020prior,carroll2019robustness}.
\medskip

\noindent \textit{Robust mechanisms for single good.} 
Maximin analysis for mean-variance ambiguity was pioneered in \cite{azar2012optimal}, generalized to higher moments in 
\cite{carrasco2018optimal}; see also \cite{chen2022distribution,suzdaltsev2018distributionally}. Other forms of knowledge were also studied, such as percentile and quantile-based information \citep{eren2010monopoly,ge2025optimal}, mean absolute deviation \citep{roos2019chebyshev, elmachtoub2021value,chen2023screening},
or knowing that the valuation distribution is within the proximity of a given reference distribution \citep{bergemann2011robust,chen2023screening}. Instead of maximin expected revenue, alternative objectives studied in the literature include absolute regret \citep{bergemann2008pricing,bergemann2011robust,elmachtoub2021value, chen2023screening} and the competitive ratio \citep{eren2010monopoly,giannakopoulos2023ratio,wang2025power}. We also refer to the work of \cite{allouah2023optimal} who assume to have prior knowledge about the probability of sale of a good.
In some works in the literature, randomized algorithms are also considered and shown to be optimal in certain cases. For selling one good in the robust setting, \cite{chen2023screening} show, based on a functional version of Von Neumann's minimax theorem \citep{borwein1986fan}, that if randomized mechanisms are allowed in the maximin problem, the values of the maximin and minimax problems are equal to each other. We are not aware of a similar result for the problem of selling multiple goods, in particular not with the additional assumption of independent goods. This relation fails to hold if only deterministic mechanisms are allowed.

\subsection{Outline}
Section \ref{sec:pre} introduces the model and gives the necessary background on mechanism design and distributionally robust optimization. Section \ref{sec:main} presents our main results, stated in three theorems. Their proofs are presented in Sections \ref{sec:minimax}, \ref{sec:maximin}, and \ref{sec:alignment}, respectively. We then reflect on the single-good setting in Section \ref{sec:single_item}. Finally, Section \ref{sec:conclusion} offers concluding remarks and discusses possible avenues for future research.

\section{Model and preliminaries}
\label{sec:pre}
We will start by formally defining deterministic truthful mechanisms and the bundling mechanism. We then introduce the distributionally robust framework using probability theory.

\subsection{Mechanism design}
\label{sec:md}
A direct revelation mechanism $A$ in the setting of a single buyer that bids on $m$ goods is defined by a pair $(z,\pi)$. Here $z: \R_{\geq 0}^m \rightarrow [0,1]^m$ is the (randomized) \emph{allocation rule}, and $\pi : \R_{\geq 0}^m \rightarrow \R_{\geq 0}$ the \emph{payment rule}. For a given valuation vector $v = (v_1,\dots,v_m)$ with $v_i$ the value the buyer has for good $i$, they receive the goods $i$ with probability $z_i(v)$, and get charged $\pi(v)$ in total for all the goods they receive. The utility of the buyer under mechanism $A$ with valuation vector $v$ is then given by $u(A,v)  =  \langle z(v),v  \rangle - \pi(v)$ with $\langle x,y \rangle = \sum_{i=1}^m x_iy_i$ the inner product for two vectors $x,y \in \R^m$.

A mechanism $A = (z,\pi)$ is \emph{truthful} if the following conditions are satisfied:
$$
\begin{array}{ll}
1. \  \langle z(v),v  \rangle - \pi(v) \geq  \langle z(w),v\rangle - \pi(w) & \text{ for all } v,w \in \R^m_{\geq 0}, \\
2. \  \langle z(v),v \rangle - \pi(v) \geq 0 & \text{ for all } v \in \R^m_{\geq 0}.
\end{array}
$$
The first condition ensures that the buyer has no incentive to misreport the true values $v$ that they have for the goods; that is, bidding truthfully is a \textit{dominant strategy} that maximizes their utility. We refer to this property as \textit{incentive compatibility}. The second condition guarantees that if a buyer truthfully reports their values, then their utility (or surplus) is non-negative, i.e., they don't lose anything by participating in the selling mechanism. This property is called \textit{individual rationality}. We restrict to truthful mechanisms throughout the paper and denote the set of all such mechanisms for $m$ goods by $\mathcal{A}_m$. Among these mechanisms, deterministic bundling is of particular importance for our analysis.
\medskip

\noindent \textit{Deterministic bundling.} A deterministic \textit{bundling mechanism} $D_{\textup{bund}}$ sells either all goods together at a \emph{bundling price} if the sum of the valuations meets or exceeds this price; otherwise, it sells none of the goods. To be precise, it sets a price $p_m =: \pi(v)$ and defines 
$$
z(v) = \left\{ \begin{array}{ll}
(1,1,\dots,1) & \text{ if } \sum_i v_i \geq p_m \\
(0,0,\dots,0) & \text{ if } \sum_i v_i < p_m
\end{array} \right.. 
$$

\subsection{Distributionally robust framework}
\label{sec:robust_design}
In this work, we assume that the valuations of the buyer are random, denoted by the random vector $X = (X_1,\dots,X_m)$ over which we have a joint probability distribution $\Prob^m = \Pi_{i=1}^m \mathbb{P}_i$. We assume that the $X_i$ are \emph{independently distributed} and indicate their marginal distribution by $\mathbb{P}_i$, i.e., $\Prob^m$ is the product distribution $\mathbb{P}_1 \times \mathbb{P}_2 \times \dots \times \mathbb{P}_m$. The (expected) revenue of a mechanism $A \in \mathcal{A}_m$ used for selling $m$ independent goods $(X_1,\dots,X_m) \sim \Prob^m$ is given by
\begin{align}
\mathrm{REV}(A,\Prob^m) = \mathbb{E}_{\Prob^m}[\pi(X_1,\dots,X_m)].
\label{eq:revenue}
\end{align}
For the special case of the deterministic bundling mechanism that sells all goods if the sum of their values exceeds $p_m$, the revenue is denoted by
\begin{align}
\text{BUND}(p_m,\Prob^m) = \mathrm{REV}(D_{\textup{bund}},\Prob^m) = p_m \cdot \Prob^m\left(\sum_{i=1}^{m} X_i \geq p_m\right).
\label{eq:rev_bundling}
\end{align}
The robust revenue maximization problem we want to solve can be seen as a two-player game, in which the seller (first player) needs to choose a truthful mechanism $A \in \mathcal{A}_m$ with the aim of maximizing expected revenue of selling $m$ goods, after which Nature (the second player) chooses a distribution $\Prob^m = \Pi_{i=1}^m \mathbb{P}_i \in \cP(\mu,d)^m$ with the aim of minimizing (normalized) expected revenue; see \eqref{eq:maxmin_intro} below. This joint distribution $\Prob^m = \Pi_{i=1}^m\mathbb{P}_i$ is the product of the marginal distributions of all $m$ goods where each $\mathbb{P}_i$ is from the ambiguity set $\cP(\mu,d)$ that assumes to know the mean $\mu$ and mean absolute deviation (MAD) $d$ of the distribution (referred to as \textit{mean-MAD ambiguity}), i.e., 
\begin{align}
\mathcal{P}(\mu,d) = \{\mathbb{P}: \mathbb{E}_{\mathbb{P}}[X] = \mu, \, \mathbb{E}_{\mathbb{P}}[|X - \mu|] = d, \text{ and } X \in [0,\infty)\}.
\label{eq:ambi_intro}
\end{align}
The ambiguity set $\cP(\mu,d)$ is non-empty if and only if $0 \leq d < 2\mu$. Since our results are monotone in the parameters $\mu$ and $d$, they continue to hold when $\mu$ is replaced by a lower bound and $d$ by an upper bound. Furthermore, although each marginal distribution is contained in the same ambiguity set, we do not assume them to be identical, so the product joint distribution $\bP^m = \Pi_{i=1}^m \bP_i$ may consist of different (marginal) $\bP_i \in \cP(\mu,d)$. We remark, however, that all results in this paper remain valid under the additional assumption that the marginals are i.i.d.

We often consider the subclass of two-point distributions with mean $\mu$ and mean absolute deviation (MAD) $d$. This class can be fully parameterized by a single parameter $\alpha \in [d/2\mu,1)$ on the left support point. We denote this family (of two-point distributions) as $\mathcal{P}_2(\mu,d) \subset \mathcal{P}(\mu,d)$. For $\alpha \in [d/(2\mu),1)$, the distribution of a random variable $X(\alpha) \sim \mathbb{P}_\alpha \in \mathcal{P}_2(\mu,d)$ is given by
\begin{align}
X(\alpha) = \left\{ \begin{array}{ll}
x(\alpha) = \displaystyle \mu - \frac{d}{2\alpha} & \text{ w.p. } \alpha \\
y(\alpha) = \mu + \displaystyle \frac{d}{2(1-\alpha)} & \text{ w.p. } 1 - \alpha 
\end{array}\right..
\label{eq:two_point_mean_mad}
\end{align}
When $X_1,\dots,X_m$ are i.i.d. according to $\mathbb{P}_{\alpha}$, then $Y = \sum_{i=1}^m X_i \sim \Prob^m_{\alpha}$ with
\begin{align}
\Prob^m_{\alpha}(Y = (m - k) \cdot x(\alpha) + k \cdot y(\alpha)) =  \alpha^{m-k}(1-\alpha)^k\binom{m}{k} \ \ \text{ for } \ \  k = 0,\dots,m. 
\label{eq:sum_Xi}
\end{align}

Because in this work we want to let $m\xrightarrow{}\infty$, we will in fact maximize the revenue scaled by $m$. That is, we want to solve the maximin problem
\begin{align}
\sup_{A \in \mathcal{A}_m} \inf_{\Prob^m \in \cP(\mu,d)^m} \, \frac{\REV(A,\Prob^m)}{m}.
\label{eq:maxmin_intro}
\end{align} 
Closely related to \eqref{eq:maxmin_intro} is the minimax problem
\begin{align}
\inf_{\Prob^m \in \cP(\mu,d)^m} \sup_{A \in \mathcal{A}_m} \, \frac{\REV(A,\Prob^m)}{m}
\label{eq:minmax_intro}
\end{align} 
in which the roles of the players are reversed: First nature gets to choose a distribution, after which the seller has to choose an optimal selling mechanism for that distribution. Roughly speaking, the difference between the two problems lies in the fact that in the maximin problem, the seller needs to hedge against distributional uncertainty, whereas in the minimax problem the seller has to hedge against a (known) worst-case distribution.
While for $m = 1$, it is known that the value of \eqref{eq:maxmin_intro} and \eqref{eq:minmax_intro} are equal if the ambiguity set is convex \citep{chen2023screening}, this is generally not known for $m > 1$. Furthermore, equality is typically not true if one would restrict to deterministic mechanisms (we return to these observations later).

\section{Main results}\label{sec:main}
We first establish an upper bound of $\mu - d/2$ on achievable (normalized) revenue in the maximin problem \eqref{eq:maxmin_intro} as $m \rightarrow \infty$, showing that no truthful mechanism can extract the full market potential $\mu$ asymptotically. We do this by showing the stronger result that the minimax problem in \eqref{eq:minmax_intro}, that upper bounds the maximin problem, is upper bounded by $\mu - d/2$.

We then analyze the optimal revenue achievable by robust deterministic bundling and show that it also equals $\mu - d/2$ as $m \rightarrow \infty$. Because this value is a lower bound on the maximin problem, this yields that the maximin and minimax problem are asymptotically equal to each other with a common value of $\mu - d/2$. We conclude by examining the power of deterministic bundling by showing that the optimal bundling price for the expected revenue objective remains optimal among all deterministic bundling mechanisms for the competitive ratio and absolute regret objective.

\subsection{Selling below the mean}\label{sec:below_mean}
As our first contribution, Theorem \ref{thm:below_mu} shows that, in the many-goods limit, no truthful mechanism can achieve the mean as expected revenue per good. This creates a surprising contrast with the mean-variance ambiguity set, where a straightforward concentration argument for bundling yields the common value of $\mu$.

\begin{theorem}[Minimax Revenue Theorem]
Let $0 \leq d < 2\mu$. Then
\begin{align}\label{eq:bound}
\lim_{m \rightarrow \infty}   \displaystyle \inf_{\Prob^m \in \mathcal{P}(\mu,d)^m}\sup_{A\in \mathcal{A}_m} \, \frac{\mathrm{REV}(A,\Prob^m)}{m} \leq \mu-\frac{d}{2}.
\end{align}
\label{thm:below_mu}
\end{theorem}

We will next provide a sketch of the main ideas underlying the proof. To analyse the minimax problem in \eqref{eq:bound}, we plug in chosen i.i.d.~distributions and argue that no (randomized) mechanism can obtain a revenue of more than $\mu - d/2$ as $m$ grows large. To keep the analysis tractable, a first idea is to plug in a two-point distribution $\Prob_2$ supported on $\{x,y\}$ with $\Prob_2(X = x) = \alpha$ and $\Prob_2(X = y) = 1-\alpha$. Note that this implies that any realized valuation vector $v$ of the joint distribution over the $m$ goods satisfies $v \in \{x,y\}^m$.  Because of the mean and MAD constraint that $\Prob_2$ must satisfy, we can fully parameterize two-point distributions in $\mathcal{P}(\mu,d)$ by $\alpha$, i.e., $x = x(\alpha)$ and $y = y(\alpha)$.

Intuitively, the worst-case distribution for the minimax problem corresponds to the situation where $\alpha \rightarrow 1$, which implies that $x(\alpha) \rightarrow \mu - d/2$ (from the left) and $y(\alpha) \rightarrow \infty$. To achieve this, the parameter $\alpha$ is chosen to depend on $m$, i.e., $\alpha = \alpha(m)$, in such a way that $\alpha(m) \rightarrow 1$ as $m \rightarrow \infty$.

By sending $\alpha(m) \rightarrow 1$ at a sufficiently fast speed, the combined contribution of valuation vectors with two or more $y$-valued goods is negligible in the revenue analysis, and so the focus will be on analyzing the contribution of valuation vectors with at most one $y$-valued good. Intuitively, one can argue that the better of setting a deterministic bundling prices just below $m \cdot x(\alpha)$, leading to a revenue of $\mu - d/2$, or a price just below $(m-1) \cdot x(\alpha) + y(\alpha)$, leading to a revenue of $d/2$, is an upper bound on what any mechanism can achieve. In other words, this yields an upper bound of $\max\{\mu-d/2,d/2\}$. For $d < \mu$, so that $\mu - d/2 < d/2$, this gives the upper bound of $\mu - d/2$ on the achievable revenue of any mechanism.

To establish the same upper bound for $d > \mu$, it turns out that using a two-point distribution does not suffice. We illustrate this in Appendix \ref{app:two-point_minimax}. Instead, we use a more complex distribution that spreads out the probability mass of the point $y(\alpha)$ over $N$ different points $\{x_1(\alpha),\dots,x_N(\alpha)\}$, with $N$ independent of $m$, in such a way that  all points diverge to infinity, but at different speeds as $m \rightarrow \infty$.  By doing this carefully the revenue is now, again intuitively, upper bounded by either bundling with a price just below $m \cdot x(\alpha)$ or bundling just below $(m-1)x(\alpha) + x_i(\alpha)$ for some $i = 1,\dots,N$. Now, however, the revenue in the second case(s) will be at most $d/(2N)$. By choosing $N$ sufficiently large, this means that the first case becomes dominant, thereby establishing the bound of $\mu - d/2$ for $d > \mu$. The full proof is given in Section \ref{sec:minimax}. \medskip

To complement Theorem \ref{thm:below_mu} we show that deterministic bundling achieves a revenue of $\mu - d/2$ for the maximin problem, thereby establishing that deterministic bundling is optimal in the many-goods limit, and also that the asymptotic maximin and minimax problems have a common value.

\begin{theorem}[Maximin Bundling Theorem]
Let  $0 \leq d < 2\mu$. Then
\begin{align}\label{eq:mu-d/2}
\lim_{m \rightarrow \infty}   \displaystyle \sup_{p_m \geq 0}\, \inf_{\Prob^m \in \mathcal{P}(\mu,d)^m} \, \frac{\mathrm{BUND}(p_m,\Prob^m)}{m} \displaystyle \geq  \mu-\frac{d}{2},
\end{align}
which is asymptotically attained by price
\begin{align}\label{def:price}
    p_m^* = (1-\epsilon_m)^2 m\left(\mu - \frac{d}{2(1-\epsilon_m)}\right),
\end{align}
where $\epsilon_m = \frac{1}{2}(1-\frac{d}{2\mu})m^{-1/5}$.
\label{thm:mu_d_2mu}
\end{theorem}

The proof of Theorem \ref{thm:mu_d_2mu} consists of two main steps. First, we derive a one-sided concentration bound tailored to distributions with finite mean and MAD. Roughly speaking, it states that the sum of $m$ i.i.d. random variables with fixed mean and MAD will be greater than or equal to $\approx m(\mu-d/2)$ with probability close to one as $m \xrightarrow{}\infty$. In the second step, we use this concentration bound to prove the result. The full proof is given in Section \ref{sec:maximin}.

Because bundling attains the upper bound in \eqref{eq:bound}, the deterministic maximin and minimax problems coincide in the many-goods limiting setting:
$$\lim_{m \rightarrow \infty}   \displaystyle \sup_{p_m \geq 0}\, \inf_{\Prob^m \in \mathcal{P}(\mu,d)^m} \, \frac{\mathrm{REV}(p_m,\Prob^m)}{m} \displaystyle =\lim_{m \rightarrow \infty}   \displaystyle\inf_{\Prob^m \in \mathcal{P}(\mu,d)^m}  \displaystyle\sup_{p_m \geq 0} \, \frac{\mathrm{REV}(p_m,\Prob^m)}{m} \displaystyle = \mu - d/2.$$ 

It is known that the above relation holds for $m = 1$ \citep[Lemma 1]{chen2023screening}, but this remains unclear when $m > 1$, and, in fact, this presents a relevant open problem. We conjecture that equality does not hold, but it remains unclear how to establish this asymptotically. Do notice that, as the bundling mechanism showing optimality for the maximin problem is deterministic, the equality above also holds if one restricts to deterministic mechanisms. In this setting, it is known already for $m = 1$ that equality does not hold. This follows from the difference in value between the maximin problem studied in \cite{roos2019chebyshev} and the minimax problem in \cite{elmachtoub2021value}.

\subsection{Price alignment through bundling}
Given that deterministic bundling is optimal for maximizing expected revenue in the many-goods limit, we further investigate this subclass of mechanisms for the competitive ratio and the absolute regret objectives. We show that, among all deterministic bundling mechanisms in the many-goods limit, the price that is optimal for expected revenue \eqref{def:price} is also optimal for the competitive ratio and absolute regret.

To proceed, we will formalize these objectives. First, the optimal full-information benchmark is
\begin{align}
\mathrm{OPT}(m,\Prob^m) = \sup_{A' \in \mathcal{A}_m}\mathrm{REV}(A',\Prob^m).
\label{eq:opt}
\end{align}
The corresponding optimal mechanism is generally not known. Regardless, we do know that $\mathrm{OPT}(m,\Prob^m) \leq m \mu$ due to individual rationality. Then, the \textit{robust competitive ratio maximization} problem is defined as
 \begin{align} 
\sup_{p_m}\inf_{\Prob^m \in \mathcal{P}^m}\,   \frac{\mathrm{BUND}(p_m,\Prob^m)}{\mathrm{OPT}(m,\Prob^m)},
\label{eq:robust_ratio_revenue}
\end{align} 
and, analogously, the \textit{robust absolute regret minimization} problem is
 \begin{align}
\inf_{p_m}\sup_{\Prob^m \in \mathcal{P}^m}\, \frac{\mathrm{OPT}(m,\Prob^m) - \mathrm{BUND}(p_m,\Prob^m)}{m}.
\label{eq:robust_regret_revenue}
\end{align}

An interesting open problem is to further investigate the optimality of the bundling price \eqref{def:price} within the general class of randomized mechanisms (or important subsets thereof). This, however, will require a different approach than bounding the min-max problem for competitive ratio, since for any $\bP^m$ (even outside the ambiguity set $\cP(\mu,d)^m$), we obtain
\begin{align*}
    \sup_{A \in \mathcal{A}_m}\frac{\mathrm{REV}(A,\Prob^m)}{\mathrm{OPT}(m,\Prob^m)}=1
\end{align*}
by definition of $\textup{OPT}(m,\bP^m)$ in \eqref{eq:opt}. Consequently, the associated min-max value is always equal to 1. A similar argument can be made for the absolute regret objective.

Theorem \ref{th:uni_price} establishes the universality of deterministic bundling: a single price function is optimal for all three classical objectives in the many-goods limit.

\begin{theorem}[Unified bundling price]\label{th:uni_price}
For $m\xrightarrow{} \infty$, the deterministic bundling mechanism with bundling price 
$$
  p_m^* = (1-\epsilon_m)^2 m\left(\mu - \frac{d}{2(1-\epsilon_m)}\right)
$$
in \eqref{def:price} with $\epsilon_m = \frac{1}{2}(1-\frac{d}{2\mu})m^{-1/5}$, asymptotically attains \eqref{eq:robust_ratio_revenue} and \eqref{eq:robust_regret_revenue}. 
\end{theorem}

We prove Theorem \ref{th:uni_price} in Section \ref{sec:alignment} by solving \eqref{eq:robust_ratio_revenue} and \eqref{eq:robust_regret_revenue} separately for $m \xrightarrow{}\infty$. The first gives a guaranteed competitive ratio of $1 - d/(2\mu)$, while the second gives an absolute regret of at most $d/2$. We then verify that the bundling price \eqref{def:price} is optimal for both objectives. An intuitive way of looking at these results is that the seller can guarantee a revenue close to $m(\mu-d/2)$, whereas the optimal mechanism can achieve a revenue close to $m\mu$, so that the ratio results in $1-d/(2\mu)$ and the absolute regret in $d/2$.

The universality result established in Theorem \ref{th:uni_price} stands in stark contrast to the single-good case. When $m=1$, all three prices for expected revenue, competitive ratio, and absolute regret, denoted by $p^*_{rev}$, $p^*_{cr}$, and $p^*_{ar}$ respectively, are distinct. Theorem \ref{th:3p_m1} shows that these prices are strictly ordered, and therefore different. A similar ordering appears in \cite{chen2022distribution} for the mean-variance ambiguity set.

\begin{proposition}\label{th:3p_m1}
    Let $0 < d < 2\mu$. The optimal robust prices for expected revenue, competitive ratio, and absolute regret satisfy \begin{align}
        p^*_{rev} < p^*_{cr} < p^*_{ar}.
    \end{align}
\end{proposition}

The proof of Proposition \ref{th:3p_m1} is found in Section \ref{sec:single_item}. Figure \ref{fig:3prices} further highlights the contrast with the many-goods setting by showing that $p^*_{ar}$ departs markedly from $p^*_{rev}$ and $p^*_{cr}$. This makes the optimal pricing decision highly sensitive to the choice of objective. In comparison, restricting to deterministic bundling in the many-goods setting resolves this issue.

\begin{figure}[h!]
    \centering
    \includegraphics[width=0.6\linewidth]{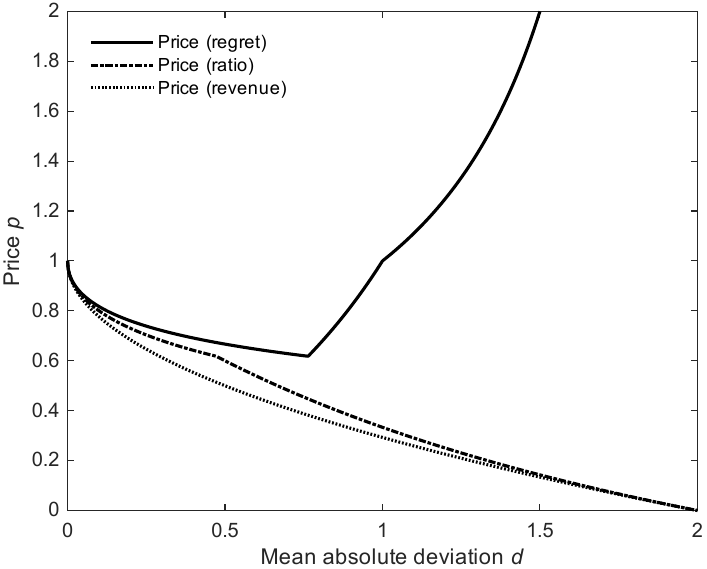}
    \caption{Optimal robust prices for $m=1$ with $\mu=1$.}
    \label{fig:3prices} 
\end{figure}

\section{Proof of the Minimax Revenue Theorem}\label{sec:minimax}
We now start the proof of Theorem \ref{thm:below_mu}. We always have for \textit{any} $A \in \mathcal{A}_m$ that
\begin{align*}
\lim_{m \rightarrow \infty}   \displaystyle \inf_{\Prob^m \in \mathcal{P}(\mu,d)^m} \, \frac{\mathrm{REV}(A,\Prob^m)}{m} \leq \lim_{m \rightarrow \infty}\inf_{\Prob^m \in \mathcal{P}(\mu,d)^m} \sup_{A \in \mathcal{A}_m}  \, \frac{\mathrm{REV}(A,\Prob^m)}{m} \leq \lim_{m \rightarrow \infty}\sup_{A \in \mathcal{A}_m}  \, \frac{\mathrm{REV}(A,\Prob^m_{N})}{m},
\end{align*}
where we choose $\Prob^m_{N} = \Pi_{i=1}^m\mathbb{P}_{N}$ with $\bP_{N}$ as
    \begin{align}\label{def:wcd}
    \bP_{N} = \left\{ \begin{array}{ll}
    x_0 = \mu-\frac{d}{2\alpha(m)} & \text{ w.p. } w_0 = \alpha(m) = 1-(Nm)^{-(m+1)}\cdot e^{-m} \\
    x_n = \mu+\frac{d}{2N}\frac{m^N-1}{(m-1)m^{N-n}(1-\alpha(m))} & \text{ w.p. } w_n = \frac{(m-1)m^{N-n}}{m^N-1}(1-\alpha(m)), \textup{for } n=1,\dots,N
\end{array}\right..
\end{align}
To see that $\bP_{N} \in \cP(\mu,d)$, one can easily verify that $w_n \geq 0$, for all $n \in \{0,1,\dots,N\}$, and that the mean and MAD constraints are satisfied. Also,
\begin{align*}
    &\sum_{n=1}^Nw_n=(1-\alpha(m))\frac{(m-1)}{m^N-1}\sum_{n=1}^N m^{N-n}=(1-\alpha(m))\frac{(m-1)}{m^N-1}\sum_{k=0}^{N-1} m^{k} \\
    &= (1-\alpha(m))\frac{(m-1)}{m^N-1}\frac{m^N-1}{m-1} = 1-\alpha(m),
\end{align*}
so that $w_0 + \sum_{n=1}^Nw_n =1$. For this distribution, every realized valuation for the goods is either $x_0$, or $x_1,\dots,x_N$. We will first argue that, based on the choice of $\alpha(m)$ and the individual rationality property, the only valuation vectors that matter for the revenue analysis, are those with at most one (realized) valuation being $x_1,\dots,x_N$. For a vector $v \in \{x_0,x_1,\dots,x_N\}^m$ we define $C_{\alpha}(v) = |\{v_i : v_i > x_0 \text{ for } i =1,\dots,m\}|$ to be the number of $x_1,\dots,x_N$ in $v$. For any truthful mechanism $A$, we can then write the revenue as 
\begin{align}
\frac{1}{m} \mathrm{REV}(A,\Prob^m_{N}) & = \frac{1}{m}\sum_{v:  C_\alpha(v) \leq 1} \pi(v)\Prob^m_{N}( (X_1,\dots,X_m) = (v_1,\dots,v_m)) \nonumber \\
& + \frac{1}{m}\sum_{k = 2}^m \sum_{v:  C_\alpha(v) = k} \pi(v)\Prob^m_{N}( (X_1,\dots,X_m) = (v_1,\dots,v_m)) \nonumber \\
&  \leq \frac{1}{m}\sum_{v:  C_\alpha(v) \leq 1} \pi(v)\Prob^m_{N}( (X_1,\dots,X_m) = (v_1,\dots,v_m)) \nonumber \\ 
&+  \frac{1}{m} \sum_{k = 2}^m \sum_{v: C_\alpha(v) = k} \left(\sum_{i=1}^m v_i\right)\Prob^m_{N}( (X_1,\dots,X_m) = (v_1,\dots,v_m)).
\label{eq:rev_split}
\end{align}
In the final inequality of \eqref{eq:rev_split}  we use that, because of individual rationality, we always have $\pi(v) \leq \langle z(v),v \rangle \leq \sum_{i=1}^m v_i$. 

We first bound the term $\sum_{v: C_\alpha(v) = k} \left(\sum_{i=1}^m v_i\right)\Prob^m_{N}( (X_1,\dots,X_m) = (v_1,\dots,v_m))$ appearing in \eqref{eq:rev_split} for each $k \in \{2,\dots,m\}$. Fix $k\geq 2$. The set $\{x_0,x_1,\dots,x_N\}^m$ contains exactly $\binom{m}{k}N^k$ valuation vectors $v$ with $C_{\alpha}(v)=k$. Now fix any $v$ such that $C_{\alpha}(v)=k$. We can split $k = \sum_{n=1}^N k_n$ where $k_n$ denotes the number of coordinates $i \in \{1,\dots,m\}$ for which $v_i=x_n$. Define the index set $K = \{n\in\{1,\dots,N\}|k_n> 0\}$. Then, for $m$  enough, we have that
\begin{align}
&\frac{1}{m} \left(\sum_{i=1}^m v_i\right)\Prob^m_{N}( (X_1,\dots,X_m)= (v_1,\dots,v_m)) = \frac{[(m-k)\cdot x_0 + \sum_{n\in K}k_n \cdot x_n]w_0^{m-k}\prod_{n\in K} w_n^{k_n}}{m} \nonumber\\
&\leq [x_0+\sum_{n\in K}x_n]w_0^{m-k}\prod_{n\in K} w_n^{k_n} \leq x_0\prod_{n\in K} w_n^{k_n} + \sum_{n \in K}(x_nw_n)w_n^{k_n-1}\prod_{l\in K\backslash\{n\}}w_l^{k_l}.\label{eq:bound3}
\end{align}
We obtain this bound by using $(m-k)/m \leq 1$, $k_n/m \leq 1$, and $w_0=\alpha(m)\leq 1$. Next, for all $n \in K$, notice that $w_n^{k_n-1}\prod_{l\in K\backslash\{n\}}w_l^{k_l} \leq (1-\alpha(m))^{k_n-1}\prod_{l\in K\backslash\{n\}}(1-\alpha(m))^{k_l} \leq (1-\alpha(m))^{k-1} \leq 1-\alpha(m)$. Similarly, $\prod_{n \in K}w_n^{k_n} \leq 1-\alpha(m)$. Hence, \eqref{eq:bound3} can be further bounded to
\begin{align}
&\frac{1}{m} \left(\sum_{i=1}^m v_i\right)\Prob^m_{N}( (X_1,\dots,X_m)= (v_1,\dots,v_m)) \leq [x_0+ \sum_{n \in K}x_nw_n](1-\alpha(m))\nonumber \\
&\leq [x_0+ \sum_{n = 1}^Nx_nw_n](1-\alpha(m)) \leq 2\mu \cdot (Nm)^{-(m+1)} \cdot e^{-m}.
\label{eq:rev_split_second}
\end{align}
The last inequality follows by observing that $x_nw_n\leq \mu + d/(2N)$. This implies that, still for $k \geq 2$, 
$$
\frac{1}{m} \sum_{v:  C_\alpha(v) = k} \left(\sum_{i=1}^m v_i\right)\Prob^m_{N}( (X_1,\dots,X_m) = (v_1,\dots,v_m)) \leq \binom{m}{k}N^k\cdot 2\mu \cdot (Nm)^{-(m+1)} \cdot e^{-m} \leq \frac{2\mu}{N}\cdot \frac{e^{-m}}{m}
$$
using that $|v:  C_\alpha(v) = k| = \binom{m}{k}N^k \leq m^kN^k\leq (Nm)^m$. Finally, because $k \in \{2,\dots,m\}$ in the outer summation, i.e., there are $m-1$ terms, we obtain
\begin{align}
\frac{1}{m} \sum_{k = 2}^m \sum_{v:  C_\alpha(v) = k} \left(\sum_{i=1}^m v_i\right)\Prob^m_{N}( (X_1,\dots,X_m) = (v_1,\dots,v_m)) \leq (m-1)\frac{2\mu}{N}\cdot \frac{e^{-m}}{m} \leq \frac{2\mu}{N} \cdot  e^{-m}.
\label{eq:rev_split_part2}
\end{align}
Taking $m \rightarrow \infty$ in the right hand side of \eqref{eq:rev_split_part2} gives an upper bound of $0$ on the contribution of all these valuation vectors $v$ to the revenue in \eqref {eq:rev_split}.

We next continue with bounding the summation
\begin{align}
\frac{1}{m}\sum_{v:  C_{\alpha}(v) \leq 1} \pi(v)\Prob^m_{N}( (X_1,\dots,X_m) = (v_1,\dots,v_m)) = \frac{1}{m} \pi\left(v^{(0)}\right)w_0^m  + \sum_{n=1}^N\frac{1}{m}\sum_{j=1}^m \pi\left(v^{(n,j)}\right)w_0^{m-1}w_n,
\label{eq:rev_split_part1a}
\end{align}
where $v^{(n,j)}$ is such that $v^{(n,j)}_k = \left\{ \begin{array}{ll}
 x_n & \text{ for } k = j \\
 x_0 & \text{ for } k \neq j 
\end{array}\right.$ and $v^{(0)} = (x_0,x_0,\dots,x_0)$ the vector only containing $x_0$'s. We denote (averaged) payments as $\pi_0 =  \pi\left(v^{(0)}\right)$ and $\bar{\pi}_n = \frac{1}{m}\sum_{j=1}^m\pi(v^{(n,j)})$ for $n \in \{1,\dots,N\}$. We also define the average allocation probabilities as $\bar{s}_0 = \frac{1}{m}\sum_{j=1}^m z_j(v^{(0)})$ and $\bar{s}_n = \frac{1}{m}\sum_{j=1}^m z_j(v^{(n,j)})$ for $n \in \{1,\dots,N\}$. Using these definitions, we can write \eqref{eq:rev_split_part1a} as
\begin{align}
    \frac{1}{m} \pi_0w_0^m  + \sum_{n=1}^N\bar{\pi}_nw_0^{m-1}w_n\leq \frac{1}{m}\pi_0  + \sum_{n=1}^N\bar{\pi}_nw_n.
\end{align}
By individual rationality, we have that
\begin{align}
\pi_0 \leq \langle z(v^{(0)}), v^{(0)} \rangle = mx_0\bar{s}_0.
\label{eq:ind_rat_v0}
\end{align}

Next, we use incentive compatibility regarding $v^{(n,j)}$ and $v^{(n-1,j)}$ for all $n \in \{1,\dots,N\}$. This results in
\begin{align}
    \langle z(v^{(n,j)}),v^{(n,j)} \rangle - \pi(v^{(n,j)}) \geq \langle z(v^{(n-1,j)}),v^{(n,j)} \rangle - \pi(v^{(n-1,j)}), \ \ j = 1,\dots,m.
\end{align}
We now expand both products. First, expanding the inner product on the left-hand side results in 
\begin{align}
    \langle z(v^{(n,j)}),v^{(n,j)} \rangle = \sum_{i\neq j}[x_0z_i(v^{(n,j)})] + x_nz_j(v^{(n,j)}) = \sum_{i=1}^m[x_0z_i(v^{(n,j)})] + (x_n-x_0)z_j(v^{(n,j)}).
\end{align}
Then, expanding the inner product on the right-hand side results in 
\begin{align}
    &\langle z(v^{(n-1,j)}),v^{(n,j)} \rangle = \sum_{i\neq j}[x_0z_i(v^{(n-1,j)})] + x_n z_j(v^{(n-1,j)}) \nonumber \\
    &= \sum_{i=1}^m[x_0z_i(v^{(n-1,j)})] + (x_n-x_0)z_j(v^{(n-1,j)}).
\end{align}
Hence, for all $j = 1,\dots,m$, we obtain
\begin{align}\label{eq:IC}
     &\pi(v^{(n,j)}) \leq \pi(v^{(n-1,j)}) + (x_n-x_0)(z_j(v^{(n,j)}) - z_j(v^{(n-1,j)})) + x_0\left[\sum_{i=1}^mz_i(v^{(n,j)})-\sum_{i=1}^mz_i(v^{(n-1,j)})\right] \nonumber \\
     &\leq \pi(v^{(n-1,j)}) + (x_n-x_0)(z_j(v^{(n,j)}) - z_j(v^{(n-1,j)})) + x_0m.
\end{align}
Define $\Delta_n := \bar{s}_n - \bar{s}_{n-1}$ for all $n \in \{1,\dots,N\}$. Summing \eqref{eq:IC} for each $j$ and dividing by $m$ gives
\begin{align}\label{eq:pibounds}
     &\bar{\pi}_n \leq \bar{\pi}_{n-1} + (x_n-x_0)(\bar{s}_n - \bar{s}_{n-1}) + x_0m.
\end{align}
We then obtain, recursively, by combining \eqref{eq:pibounds} and \eqref{eq:ind_rat_v0}, the upper bound
\begin{align}\label{eq:pibounds2}
     &\bar{\pi}_n \leq nx_0m + mx_0\bar{s}_0 + \sum_{k=1}^n(x_k-x_{0})\Delta_k \leq (N+1)m\left(\mu-\frac{d}{2}\right)+\sum_{k=1}^n(x_k-x_{0})\Delta_k,
\end{align}
and consequently,
\begin{align}
    \sum_{n=1}^N\bar{\pi}_nw_n \leq N(N+1)m\left(\mu-\frac{d}{2}\right)(1-\alpha(m)) + \sum_{n=1}^Nw_n\sum_{k=1}^n(x_k-x_{0})\Delta_k  \leq C(m) + \sum_{n=1}^N x_nw_n\Delta_n
\end{align}
with $C(m)$ such that $\lim_{m\xrightarrow{}\infty}C(m) = 0$. This follows because, for any $n \in \{1,\dots,N\}$, we have $\lim_{m\xrightarrow{}\infty}w_nx_k=\lim_{m\xrightarrow{}\infty}\left(w_n\mu+\frac{d}{2N}\cdot m^{-(n-k)}\right) =0$, for all $k < n$. 
We now focus on bounding $\sum_{n=1}^N x_nw_n\Delta_n$. We start by giving bounds on $\Delta_n$.
\begin{lemma}\label{lemma:monotonicity}
    Let $n \in \{1,\dots,N\}$. The following holds: 
    $$\Delta_n \geq 0.$$
\end{lemma}
\begin{proof}
    Since $A\in\mathcal{A}_m$ is truthful, we have due to incentive compatibility that
    \begin{align*}
        \langle z(v^{(n,j)}),v^{(n,j)} \rangle - \pi(v^{(n,j)}) \geq \langle z(v^{(n-1,j)}),v^{(n,j)} \rangle - \pi(v^{(n-1,j)}),
    \end{align*}
    and 
    \begin{align*}
        \langle z(v^{(n-1,j)}),v^{(n-1,j)} \rangle - \pi(v^{(n-1,j)}) \geq \langle z(v^{(n,j)}),v^{(n-1,j)} \rangle - \pi(v^{(n,j)}).
    \end{align*}
    Adding the incentive compatibility constraints results in
    \begin{align*}
        \langle z(v^{(n,j)}),v^{(n,j)} \rangle+\langle z(v^{(n-1,j)}),v^{(n-1,j)} \rangle \geq \langle z(v^{(n-1,j)}),v^{(n,j)} \rangle + \langle z(v^{(n,j)}),v^{(n-1,j)} \rangle.
    \end{align*}
    By rearranging terms we get
    \begin{align*}
        \langle z(v^{(n,j)}),v^{(n,j)} \rangle - \langle z(v^{(n,j)}),v^{(n-1,j)} \rangle \geq \langle z(v^{(n-1,j)}),v^{(n,j)} \rangle - \langle z(v^{(n-1,j)}),v^{(n-1,j)} \rangle,
    \end{align*}
    or equivalently,
    \begin{align*}
        \langle z(v^{(n,j)}),v^{(n,j)} - v^{(n-1,j)} \rangle \geq \langle z(v^{(n-1,j)}),v^{(n,j)}-v^{(n-1,j)} \rangle.
    \end{align*}
    This implies that
    \begin{align*}
        \langle z(v^{(n,j)})-z(v^{(n-1,j)}),v^{(n,j)} - v^{(n-1,j)} \rangle \geq 0.
    \end{align*}
    Since the valuations $v^{(n,j)},v^{(n-1,j)}$ are the same except at location $j$, this results in
    \begin{align*}
        (x_n - x_{n-1}) \cdot (z_j(v^{(n,j)}) - z_j(v^{(n-1,j)})) \geq 0.
    \end{align*}
    Since $x_n > x_{n-1}$, we have that $z_j(v^{(n,j)}) \geq z_j(v^{(n-1,j)})$. Taking the average (summing over all $j$ and dividing by $m$) gives $\bar{s}_n \geq \bar{s}_{n-1}$, which implies $\Delta_n \geq 0$.
\end{proof}

Next, notice that $\sum_{n=1}^N\Delta_n = \bar{s}_N-\bar{s}_0 \leq 1-\bar{s}_0$. Then, in combination with Lemma \ref{lemma:monotonicity}, we get
\begin{align}
    \sum_{i=1}^Nx_nw_n\Delta_n \leq \max\{w_1x_1,\dots,w_Nx_N\}\sum_{n=1}^N\Delta_n \leq \max\{w_1x_1,\dots,w_Nx_N\}(1-\bar{s}_0).
 \end{align}
Taking $N = \lceil\frac{d}{2(\mu-d/2)}\rceil$ and $m \xrightarrow{} \infty$ then gives
\begin{align}
    &\lim_{m\xrightarrow{}\infty} \frac{1}{m}\pi_0  + \sum_{n=1}^N\bar{\pi}_nw_n. \leq \lim_{m\xrightarrow{}\infty} x_0\bar{s}_0 + C(m) + \max\{w_1x_1,\dots,w_Nx_N\}(1-\bar{s}_0)\nonumber \\
    &\leq \left(\mu-\frac{d}{2}\right)\bar{s}_0 + \frac{d}{2N}(1-\bar{s}_0) \leq \max\left\{\mu-\frac{d}{2},\frac{d}{2N}\right\} = \mu - \frac{d}{2}.
\end{align}
This completes the proof.

\section{Proof of the Maximin Bundling Theorem}\label{sec:maximin}
In this section we prove the result of Theorem \ref{thm:mu_d_2mu}. First, we derive a one-sided concentration bound for fixed mean and MAD, and then we use this bound to complete the proof.

\subsection{Deriving the one-sided concentration bound}
\label{sec:concentration}
The cornerstone of the proof of Theorem \ref{thm:mu_d_2mu} is Proposition \ref{prop:tail_sum}, which might be of independent interest. Informally, it states that the sum of $m$ i.i.d. random variables with mean $\mu$ and mean absolute deviation (MAD) $d$ will be greater than or equal to $\approx m(\mu - d/2)$ with probability close to one as $m$ grows large. We will use the following one-sided Chebyshev bound in the proof of Proposition \ref{prop:tail_sum}.

\begin{lemma}[Chebyshev bound]
If $Y = \sum_{i=1}^m X_i$ is the sum of i.i.d. random variables $X_1,\dots,X_m$ that all have mean $\mu$ and variance at most $\sigma^2$, then for any $0 < \gamma < 1$ it holds that
\begin{align}
    \Prob(Y \leq (1-\gamma)m\mu) \leq \frac{m \sigma^2}{\gamma^2m^2\mu^2} = \frac{\sigma^2}{(\gamma\mu)^2m}.
\end{align}
\label{lem:chebyshev}
\end{lemma}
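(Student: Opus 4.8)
The plan is to reduce the statement to a single application of the two-sided Chebyshev inequality applied to the sum $Y$. First I would record the two elementary facts that make this work. Since the $X_i$ are i.i.d.\ with mean $\mu$, linearity of expectation gives $\E[Y] = m\mu$; and since they are independent with variance at most $\sigma^2$, the variance adds across coordinates, so $\mathrm{Var}(Y) = \sum_{i=1}^m \mathrm{Var}(X_i) \leq m\sigma^2$. These are the only structural inputs needed.

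Next I would rewrite the target event as a centered deviation event. The event $\{Y \leq (1-\gamma)m\mu\}$ is equivalent to $\{m\mu - Y \geq \gamma m\mu\}$, which is contained in the symmetric event $\{|Y - \E[Y]| \geq \gamma m\mu\}$. Applying Chebyshev's inequality to $Y$ with deviation threshold $\gamma m\mu$ then yields
$$
\Prob(Y \leq (1-\gamma)m\mu) \leq \Prob(|Y - m\mu| \geq \gamma m\mu) \leq \frac{\mathrm{Var}(Y)}{(\gamma m\mu)^2} \leq \frac{m\sigma^2}{\gamma^2 m^2 \mu^2} = \frac{\sigma^2}{(\gamma\mu)^2 m},
$$
which is exactly the claimed bound.

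There is no genuine obstacle here: the argument is a direct consequence of Chebyshev once the mean and variance of $Y$ are computed. The only points requiring a moment of care are (i) that passing from the one-sided event to the symmetric event discards the upper tail and hence loses a factor of roughly $2$, but this slack is harmless for the intended application to Proposition~\ref{prop:tail_sum}; and (ii) that the hypothesis $0 < \gamma < 1$ guarantees the threshold $(1-\gamma)m\mu$ is a strictly positive, non-trivial left-deviation point, so the inclusion of events above is meaningful and the resulting bound is informative.
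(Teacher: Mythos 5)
Your proof is correct and matches the paper's (implicit) argument: the paper simply states the lemma as a direct consequence of Chebyshev's inequality, and your derivation---computing $\E[Y]=m\mu$, bounding $\mathrm{Var}(Y)\leq m\sigma^2$ via independence, and applying the two-sided Chebyshev bound to the one-sided event---is exactly the intended route. Nothing is missing.
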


We now state the one-sided concentration result.
\begin{proposition}[One-sided concentration bound]
Let $\mu > 0$, $0 \leq d < 2\mu$, and $\epsilon \in (0,\frac{1}{2}(1-\frac{d}{2\mu}))$ be given. Then, for every  $m \in \N$, it holds that
\begin{align}
\inf_{\Prob^m \in \cP(\mu,d)^m} \Prob^m\left(\sum_{i=1}^m X_i \geq (1-\epsilon)^2 m\left(\mu - \frac{d}{2(1-\epsilon)}\right)\right) \geq 1 - \frac{f(\mu,d)}{m\epsilon^4},
\label{eq:concentration}
\end{align}
where $f(\mu,d)=\left(\frac{2\mu+d}{2\mu-d}\right)^2$ is a function independent of $m$ and $\epsilon$.
\label{prop:tail_sum}
\end{proposition}
\begin{proof}
The outline of the proof is as follows. We show that, if one chooses a fixed $t = t(\mu,d,\epsilon)$ large enough, then the contribution of the values greater than or equal to $t$ to the mean $\mu$ of any $\mathbb{P}_i \in \mathcal{P}(\mu,d)$ is roughly at most $g(t) \cdot \mu + d/2$ with $g(t)$ small if $t$ is large enough. If we condition the distribution $\mathbb{P}_i$ on values smaller than $t$, then the resulting distribution has mean $(1 - g(t))\mu - d/2$ and finite variance. We can then use Chebyshev's bound to argue that we have concentration around $m((1 - g(t))\mu - d/2)$. This will give the desired result.

We continue with the formal proof, for which we will use the following lemma. 
\begin{lemma}
Let $t \geq \mu + d/2$ be fixed. Then
\begin{align}
\sup_{\mathbb{P} \in \mathcal{P}(\mu,d)} \mathbb{E}_{\mathbb{P}}[X \cdot \1_{\geq t}(X)] =  \frac{d}{2(t-\mu)}\cdot\mu + \frac{d}{2},
\end{align}
where $\1_{\geq t}(x) = 1$ if $x \geq t$ and $\1_{\geq t}(x) = 0$ if $x < t$.
\label{prop:tails}
\end{lemma}
\begin{proof}
Let $\mathbb{P} \in \Pmud$ be a fixed probability distribution. We can merge all the probability mass under $t$ in one point, as well as all the probability mass above $t$. Formally speaking, we look at the two-point distribution $\mathbb{P}'$ supported on $\{x',y'\}$ with $x' = \mathbb{E}[X | X < t]$, $y' = \mathbb{E}[X | X \geq t]$ and $\alpha' = \mathbb{P}(X < t)$ the probability mass on point $x'$. It is not hard to see that distribution $\mathbb{P}'$ has mean $\mu$ and a MAD $d' \leq d$, because of the convex nature of the function $\phi(x) = |x - \mu|$. Furthermore, it holds that $\mathbb{E}_{\mathbb{P}}[X \cdot \1_{\geq t}(X)] = \mathbb{E}_{\mathbb{P}'}[X \cdot \1_{\geq t}(X)]$ since the function $x \cdot \1_{\geq t}(x)$ is piece-wise linear (and the fact that we reduce to a two-point distribution using the point $t$).

Now, we have $\mathbb{E}_{\mathbb{P}'}[X \cdot \1_{\geq t}(X)] = y'(\alpha')(1-\alpha') = (1-\alpha')\mu + d'/2$ if $y'(\alpha') \geq t$ (note that also $t \geq \mu + d'/2$), and zero otherwise. In order to maximize this quantity, we should choose $1 - \alpha$ as large as possible, but still small enough so that $y(\alpha) \geq t$ (otherwise the objective equals zero). The latter inequality is equivalent to $1-\alpha \leq d'/(2(t-\mu))$, so we choose $1 - \alpha' = d'/(2(t-\mu))$. It then follows that
$$
\sup_{\mathbb{P} \in \mathcal{P}(\mu,d)} \mathbb{E}_{\mathbb{P}}[X \cdot \1_{\geq t}(X)] \leq \frac{d'}{2(t-\mu)}\cdot\mu + \frac{d'}{2} \leq \frac{d}{2(t-\mu)}\cdot\mu + \frac{d}{2},
$$
where the second inequality holds because $d' \leq d$. The bound can be attained by choosing the two-point distribution supported on $\{x(\alpha),y(\alpha)\}$ for which $1 - \alpha = d/(2(t-\mu))$. The proof of this claim is the same as the reasoning we gave above for the distribution supported on $\{x',y'\}$.
\end{proof}
Now fix an arbitrary $\epsilon \in (0,\frac{1}{2}(1 - \frac{d}{2\mu}))$ and choose $t = \mu+d/(2\epsilon)$. Because $\epsilon \in (0,\frac{1}{2}(1-\frac{d}{2\mu})) \subset (0,1)$, this will always result in $t \geq \mu+d/2$. For any given $\Prob^m \in \cP(\mu,d)^m$, let $\Prob^m_t(h(X_1,\dots,X_m) \in E) := \Prob^m(h(X_1,\dots,X_m) \in E | X_1 < t,\dots,X_m<t)$ for any function $h$ and event $E$. Note that by definition of $\Prob^m_t$, for a given $c_m \geq 0$ it holds that
\begin{align}
\Prob^m\left(\sum_{i=1}^m X_i \geq c_m\right) \geq \Prob^m\left(\sum_{i=1}^m X_i \geq c_m \ \Big| \ X_i < t \ \forall i\right) = \Prob_t^m\left(\sum_{i=1}^m X_i^t \geq c_m\right)
\label{eq:estimate_cond}
\end{align}
with $(X_1,\dots,X_m) \sim \Prob^m$ and $(X^t_1,\dots,X^t_m) \sim \Prob^m_t$. Because of Lemma \ref{prop:tails}, and using $\Prob_i(X_i < t) \leq 1$, we have
\begin{align}
    \mu^t_i := \mathbb{E}_{\Prob_i}[X_i | X_i < t]= \frac{\mu - \mathbb{E}_{\Prob_i}[X_i \cdot \1_{\geq t}(X_i)]}{\Prob_i(X_i < t)}  \geq  (1- \epsilon)\mu - \frac{d}{2},
\label{eq:cond_mean}
\end{align}
and define $\mu^t_{\textup{min}} := \min\{\mu^t_1,\dots,\mu^t_m\}$.
The conditional distribution $\Prob_i(X_i |X_i < t)$ has finite support on $[0,t)$ and therefore has
\begin{align}
    (\sigma^t_i)^2 := \mathbb{E}_{\Prob_i}\left((X^t_i-\mu^t_i)^2|X_i < t\right) \leq \mu^t_i(t-\mu^t_i),
\end{align}
which is maximal when $\mu^t_i = \frac{1}{2}t$, as $\mu^t_i \in [0,t)$. Hence, we obtain the upper bound
\begin{align}\label{sigma^t}
    (\sigma^t_i)^2 \leq \frac{1}{2}t(t-\frac{1}{2}t) = \frac{1}{4}t^2.
\end{align}
Then Chebyshev's bound in Lemma \ref{lem:chebyshev} with \eqref{sigma^t} tells us that
\begin{align}
\Prob_t^m\left(\sum_{i=1}^m X_i^t > (1- \epsilon)m\mu^t_{\textup{min}}\right) \geq 1 - \frac{t^2}{4(\epsilon \mu^t_{\textup{min}})^2m},
\end{align}
Combined with \eqref{eq:estimate_cond} and \eqref{eq:cond_mean}, this then tells us that
\begin{align}\label{eq:function_f}
&\Prob^m\left(\sum_{i=1}^m X_i > (1- \epsilon)m\left((1- \epsilon)\mu - \frac{d}{2}\right)\right) \geq 1 - \frac{t^2}{\epsilon^2(2(1-\epsilon)\mu-d)^2m}\nonumber \\
& = 1 - \frac{(2\mu \epsilon+d)^2}{4\epsilon^4(2(1-\epsilon)\mu-d)^2m}.
\end{align}
Since $2\mu\epsilon \leq (2\mu-d)/2$ and $2(1-\epsilon)\mu-d\geq (2\mu-d)/2$, 
\eqref{eq:function_f} results in
\begin{align}
    \Prob^m\left(\sum_{i=1}^m X_i > (1- \epsilon)m\left((1- \epsilon)\mu - \frac{d}{2}\right)\right) \geq 1 - \frac{(2\mu+d)^2}{(2\mu-d)^2m\epsilon^4}.
\end{align}
As $\Prob^m$ was chosen arbitrarily, this inequality also holds for the infimum over $\mathbb{P}^m$. 
\end{proof}

\subsection{Applying the one-sided concentration result}\label{sec:3.1}
Recall from \eqref{eq:rev_bundling} that $\mathrm{BUND}(p_m,\Prob^m) = p_m \Prob^m(\sum_{i=1}^m X_i \geq p_m)$. Using Proposition \ref{prop:tail_sum}, it follows that for every $0 < \epsilon < \frac{1}{2}(1 - d/2\mu)$, we have
\begin{align}
\sup_{p_m} \inf_{\Prob^m \in \mathcal{P}(\mu,d)^m} \, \frac{p_m \cdot \Prob^m(\sum_{i=1}^m X_i \geq p_m)}{m} &\geq  \inf_{\Prob^m \in \mathcal{P}(\mu,d)^m} \, \frac{p_m^* \cdot \Prob^m(\sum_{i=1}^m X_i \geq p_m^*)}{m}\nonumber\\
&\geq \frac{p_m^*}{m}\left(1 - \frac{f(\mu,d)}{m\epsilon^4}\right).
\label{eq:lower_maximin}
\end{align}
Take $\epsilon = \epsilon_m = \frac{1}{2}(1-\frac{d}{2\mu})m^{-1/5}$, so that the expression on the right approaches $p_m^*/m$ as $m \rightarrow \infty$. Note that any $\epsilon$ of the form $\frac{1}{2}(1-\frac{d}{2\mu})m^{-1/(4+\delta)}$ with $\delta > 1$ is sufficient.

On the other hand, if $p_m \geq m(\mu - d/2)$, then nature can asymptotically put all probability mass of $\sum_{i=1}^m X_i$ below $p_m$ by choosing the two-point distribution $\mathbb{P}_{\alpha}$ with $\alpha \rightarrow 1$ in \eqref{eq:two_point_mean_mad} for all $i \in \{1,\dots,m\}$. That is, the leftmost point in the support of $\sum_{i=1}^m X_i$, $m \cdot x(\alpha)$ approaches $m(\mu - d/2)$ from the left, and the mass on this point approaches $1$, as follows from \eqref{eq:sum_Xi}. This implies that $\inf_{\Prob^m \in \mathcal{P}(\mu,d)^m} \, p_m \cdot \Prob^m(\sum_{i=1}^m X_i \geq p_m) / m = 0,$
when $p_m \geq m(\mu - d/2)$. Therefore 
\begin{align}
\sup_{p_m} \inf_{\Prob^m \in \mathcal{P}(\mu,d)^m} \, \frac{p_m \cdot \Prob^m(\sum_{i=1}^m X_i \geq p_m)}{m} = \sup_{p_m < m(\mu-d/2)} \inf_{\Prob^m \in \mathcal{P}(\mu,d)^m} \, \frac{p_m \cdot \Prob^m(\sum_{i=1}^m X_i \geq p_m)}{m} \leq \mu-\frac{d}{2}
\label{eq:upper_maximin}
\end{align}
using that $\Prob^m(\sum_{i=1}^m X_i \geq p_m) \leq 1$. Combining the lower and upper bounds in \eqref{eq:lower_maximin} and \eqref{eq:upper_maximin}, respectively, it follows that
$$
\mu-\frac{d}{2}=\lim_{m\xrightarrow{}\infty}(1-\epsilon_m)^2\left(\mu - \frac{d}{2(1-\epsilon_m)}\right) \leq \lim_{m \rightarrow \infty} \sup_{p_m} \inf_{\Prob^m \in \mathcal{P}(\mu,d)^m} \,  \frac{ \mathrm{BUND}(p_m,\Prob^m)}{m}  \leq \mu - \frac{d}{2},
$$
since $\lim_{m\xrightarrow{}\infty}\epsilon_m= \lim_{m\xrightarrow{}\infty}\frac{1}{2}(1-\frac{d}{2\mu})m^{-1/5}=0$. This completes the proof.

\section{Unified bundling price in the many-goods limit}\label{sec:alignment}
We will prove Theorem \ref{th:uni_price} by resolving \eqref{eq:robust_ratio_revenue} and \eqref{eq:robust_regret_revenue} separately while demonstrating that $p^*_m$ achieves the optimal values.

\subsection{Competitive ratio under bundling}
We first analyze the competitive ratio by solving \eqref{eq:robust_ratio_revenue}.
\begin{proposition} 
Let $0 \leq d < 2\mu$. The maximin competitive ratio satisfies
\begin{align}\label{eq:ratio_m}
\lim_{m\rightarrow \infty} \sup_{p_m} \inf_{\Prob^m \in \cP(\mu,d)^m} \frac{\mathrm{BUND}(p_m,\Prob^m)}{\mathrm{OPT}(m,\Prob^m)} =  1 - \frac{d}{2\mu},
\end{align}
where the supremum is asymptotically attained by bundling price $p^*_m$ in \eqref{def:price}.
\label{thm:maximin_ratio}
\end{proposition}
\begin{proof}
We will prove \eqref{eq:ratio_m} by proving the following inequalities:
\begin{align}
1 - \frac{d}{2\mu} &\leq \lim_{m\rightarrow \infty}  \left[\sup_{p_m} \inf_{\Prob^m \in \cP(\mu,d)^m} \frac{\mathrm{BUND}(p_m,\Prob^m)}{\mathrm{OPT}(m,\Prob^m)} \right] \nonumber\\
&\leq
\lim_{m\rightarrow \infty} \left[\sup_{p_m} \inf_{\Prob^m \in \cP_2(\mu,d)^m} \frac{\mathrm{BUND}(p_m,\Prob^m)}{\mathrm{OPT}(m,\Prob^m)} \right] \leq 1 - \frac{d}{2\mu}.
\label{eq:inequalities}
\end{align}
The second inequality trivially holds, as we take the infimum over a larger set on the left-hand side compared to the right-hand side. We start by proving the first inequality in \eqref{eq:inequalities}.

Consider $p_m^*$. We have
\begin{align}
\sup_{p_m} \inf_{\Prob^m \in \cP(\mu,d)^m} \frac{\mathrm{BUND}(p_m,\Prob^m)}{\mathrm{OPT}(m,\Prob^m)} \geq  \inf_{\Prob^m \in \cP(\mu,d)^m} \frac{\mathrm{BUND}(p_m^*,\Prob^m)}{\mathrm{OPT}(m,\Prob^m)} \geq \frac{\inf_{\Prob^m \in \cP(\mu,d)^m} \mathrm{BUND}(p_m^*,\Prob^m)}{m\mu}
\label{eq:bundle_tight2}
\end{align}
using that $\mathrm{OPT}(m,\Prob^m) \leq m \mu$. This inequality holds because any truthful mechanism is individually rational, which implies that $\pi(v) \leq \sum_i v_i$. In expectation, this yields a bound of $m\mu$. Then, plugging in the concentration bound \eqref{eq:concentration} with $\epsilon = \epsilon_m = \frac{1}{2}(1-\frac{d}{2\mu})m^{-1/5}$ in \eqref{eq:bundle_tight2} gives 
$$
\sup_{p_m} \inf_{\Prob^m \in \cP(\mu,d)^m} \frac{\mathrm{BUND}(p_m,\Prob^m)}{\mathrm{OPT}(m,\Prob^m)} \geq \frac{(1-\epsilon_m)^2 m\left(\mu - \frac{d}{2(1-\epsilon_m)}\right)(1 - f(\mu,d)/(m\epsilon_m^4))}{m\mu},
$$
and then 
$$
\lim_{m \rightarrow \infty} \left[ \sup_{p_m} \inf_{\Prob^m \in \cP(\mu,d)^m} \frac{\mathrm{BUND}(p_m,\Prob^m)}{\mathrm{OPT}(m,\Prob^m)}\right] \geq \frac{\mu-d/2}{\mu} = 1-\frac{d}{2\mu}.
$$

For any fixed $m$, it is not hard to argue that for some function $g(\mu,d) > 0$ independent of $m$, we have $\mathrm{OPT}(m,\Prob^m) \geq g(\mu,d) > 0$ for every $\Prob^m \in \cP_2(\mu,d)^m$, e.g., by using the mechanism that separately sells every good using the optimal robust price in the single good setting. Consider the following case distinction.

\textbf{Case 1: $p_m \geq m(\mu - d/2)$.} Nature can asymptotically get all probability mass of $Y=\sum_{i=1}^m X_i$ below $p_m$, by choosing $\Prob^m_{\alpha}=\Pi_{i=1}^m\mathbb{P}_{\alpha}$ with $\mathbb{P}_{\alpha}$ the two-point distribution with  $\alpha \rightarrow 1$ in \eqref{eq:two_point_mean_mad}, for all $i \in \{1,\dots,m\}$. That is, the leftmost point in the support of $Y$, $m \cdot x(\alpha)$ approaches $m(\mu - d/2)$ from the left and the mass on this point approaches $1$, as follows from \eqref{eq:sum_Xi}. This implies that the revenue of bundling approaches zero as $\alpha \rightarrow 1$ (recall that $m$ is fixed at this point), and so 
\begin{align}
\inf_{\Prob^m \in \mathcal{P}_2(\mu,d)^m} \frac{\mathrm{BUND}(p_m,\Prob^m)}{\mathrm{OPT}(m,\Prob^m)}  \leq \inf_{\Prob^m \in \mathcal{P}_2(\mu,d)^m} \frac{\mathrm{BUND}(p_m,\Prob^m)}{g(\mu,d)}=  \frac{\inf_{\Prob^m \in \mathcal{P}_2(\mu,d)^m} \mathrm{BUND}(p_m,\Prob^m)}{g(\mu,d)}  = 0.
\label{eq:case1}
\end{align}

\textbf{Case 2: $p_m < m(\mu-d/2)$.} Note that always
$$
\mathrm{BUND}(p_m,\Prob^m) = p_m \Prob^m\left(\sum_{i=1}^m X_i \geq p_m\right) \leq p_m.
$$
This implies that
\begin{align}
\inf_{\Prob^m \in \cP_2(\mu,d)^m} \frac{\mathrm{BUND}(p_m,\Prob^m)}{\mathrm{OPT}(m,\Prob^m)} \leq \inf_{\Prob^m \in \cP_2(\mu,d)^m} \frac{p_m}{\mathrm{OPT}(m,\Prob^m)} =  \frac{p_m} {\sup_{\Prob^m \in \cP_2(\mu,d)^m}\mathrm{OPT}(m,\Prob^m)},
\label{eq:case2}
\end{align}
where the last equality holds because $p_m$ does not depend on $\Prob^m$. 

Fix any $\gamma > 0$. Using the joint distribution $\Prob^m_{\alpha} = \Pi_{i=1}^m \mathbb{P}_{\alpha}$ with $\mathbb{P}_{\alpha}$ the fixed two-point distribution with $\alpha = d/(2\mu)$, and using the bundling mechanism that sells at the bundle price $(1-\gamma)m\mu$, gives
\begin{align}
\sup_{\Prob^m \in \cP_2(\mu,d)^m}\mathrm{OPT}(m,\Prob^m) \geq \mathrm{BUND}((1-\gamma )m\mu,\Prob^m_{\alpha}) \geq (1-\gamma )m\mu \cdot \left[1 - \frac{\mathrm{Var}(X)}{(\gamma \mu)^2 m}\right]
\label{eq:case2_2}
\end{align}
by using Lemma \ref{lem:chebyshev}. Note that, when $X \sim \mathbb{P}_{\alpha}$,
$$
\text{Var}(X) = (d/2\mu)\cdot (0 - \mu)^2 + (1 - d/(2\mu))\cdot (\mu - \mu - d\mu/(2\mu - d))^2 =: h(\mu,d).
$$
Plugging \eqref{eq:case2_2} into \eqref{eq:case2} gives that for $p_m < m(\mu - d/2)$, we have
\begin{align}
\inf_{\Prob^m \in \cP_2(\mu,d)^m} \frac{\mathrm{BUND}(p_m,\Prob^m)}{\mathrm{OPT}(m,\Prob^m)}   & \leq \frac{p_m}{(1-\gamma )m\mu \cdot \left[1 - \frac{h(\mu,d)}{(\gamma\mu)^2 m}\right]} \leq \frac{m(\mu-d/2)}{(1-\gamma)m\mu \cdot \left[1 - \frac{h(\mu,d)}{(\gamma\mu)^2 m}\right]} \nonumber \\
& = \frac{1}{1-\gamma} \cdot \frac{2\mu-d}{2\mu} \cdot \left[1 - \frac{h(\mu,d)}{(\gamma\mu)^2 m}\right]^{-1}.
\label{eq:case2_3}
\end{align}
Combining Cases 1 and 2 implies that 
$$
\sup_{p_m} \inf_{\Prob^m \in \cP_2(\mu,d)^m} \frac{\mathrm{BUND}(p_m,\Prob^m)}{\mathrm{OPT}(m,\Prob^m)} \leq  \frac{1}{1-\gamma} \cdot \left(1 - \frac{d}{2\mu}\right) \cdot \left[1 - \frac{h(\mu,d)}{(\gamma\mu)^2 m}\right]^{-1},
$$
and so
$$
\lim_{m \rightarrow \infty} \left[\sup_{p_m} \inf_{\Prob^m \in \cP_2(\mu,d)^m} \frac{\mathrm{BUND}(p_m,\Prob^m)}{\mathrm{OPT}(m,\Prob^m)}\right] \leq  \frac{1}{1-\gamma} \cdot \left(1 - \frac{d}{2\mu}\right).
$$
Since $\gamma > 0$ was chosen arbitrarily, we obtain
\begin{align}
\lim_{m \rightarrow \infty} \left[\sup_{p_m} \inf_{\Prob^m \in \cP_2(\mu,d)^m} \frac{\mathrm{BUND}(p_m,\Prob^m)}{\mathrm{OPT}(m,\Prob^m)}\right] \leq 1 - \frac{d}{2\mu}.
\label{eq:lower_bound}
\end{align}
This finishes the proof.
\end{proof}

\subsection{Absolute regret under bundling}
We now turn to the analysis of absolute regret by proving \eqref{eq:robust_regret_revenue}.

\begin{proposition}
Let $0 \leq d < 2\mu$. The minimax absolute regret satisfies
\begin{align}\label{eq:regret_m}
\lim_{m\rightarrow \infty} \inf_{p_m} \sup_{\Prob^m \in \mathcal{P}(\mu,d)^m} \frac{\mathrm{OPT}(m,\Prob^m) - \mathrm{BUND}(p_m,\Prob^m)}{m}  =  \frac{d}{2}, \, 
\end{align}
where the infimum is asymptotically attained by bundling price $p^*_m$ in \eqref{def:price}.
\label{thm:minimax_regret}
\end{proposition}
\begin{proof}
First, suppose that the seller chooses $p_m^*$ as the selling price. Then
\begin{align*}
&\inf_{p_m} \sup_{\Prob^m \in \mathcal{P}(\mu,d)^m} \frac{\mathrm{OPT}(m,\Prob^m) - \mathrm{BUND}(p_m,\Prob^m)}{m} \leq \sup_{\Prob^m \in \mathcal{P}(\mu,d)^m} \frac{\mathrm{OPT}(m,\Prob^m) - \mathrm{BUND}(p_m^*,\Prob^m)}{m} \\
& \leq \sup_{\Prob^m \in \mathcal{P}(\mu,d)^m}\frac{\mathrm{OPT}(m,\Prob^m)}{m} -\inf_{\Prob^m \in \mathcal{P}(\mu,d)^m}\frac{\mathrm{BUND}(p_m^*,\Prob^m)}{m}  \leq \mu - \frac{p_m^*}{m}(1 - f(\mu,d)/(m \epsilon_m^4)) \\
&= \mu - (1-\epsilon_m)^2\left(\mu - \frac{d}{2(1-\epsilon_m)}\right)(1 - f(\mu,d)/(m\epsilon_m^4)).
\end{align*}
Here we use the upper bound $\mathrm{OPT}(m,\Prob^m) \leq m\mu$ and $f(\mu,d)$ is as in Proposition \ref{prop:tail_sum}. Taking the limit of $m \rightarrow \infty$, we then find that 
$$
\lim_{m\rightarrow \infty} \left[\inf_{p_m} \sup_{\Prob^m \in \mathcal{P}(\mu,d)^m} \frac{\mathrm{OPT}(m,\Prob^m) - \mathrm{BUND}(p_m,\Prob^m)}{m} \right] \leq \frac{d}{2}.
$$

\noindent We next fix $m$ and consider a  case distinction in order to prove that $d/2$ is also a lower bound.

\textbf{Case 1:} $p_m < m(\mu-d/2)$. Note that  $\mathrm{BUND}(p_m,\Prob^m) \leq p_m < m(\mu-d/2)$ always holds. Also, we have for any $0 < \gamma < 1$, it holds that 
$$
\sup_{\Prob^m \in \mathcal{P}(\mu,d)^m} \frac{\mathrm{OPT}(m,\Prob^m)}{m} \geq (1-\gamma )\mu \cdot\left[1 - \frac{g(\mu,d)}{(\gamma \mu)^2 m}\right].
$$

Then we have 
\begin{align*}
\sup_{\Prob^m \in \mathcal{P}(\mu,d)^m} \frac{\mathrm{OPT}(m,\Prob^m) - \mathrm{BUND}(p_m,\Prob^m)}{m} &\geq \sup_{\Prob^m \in \mathcal{P}(\mu,d)^m} \frac{\mathrm{OPT}(m,\Prob^m) - m(\mu-d/2)}{m}\\
& = \sup_{\Prob^m \in \mathcal{P}(\mu,d)^m} \frac{\mathrm{OPT}(m,\Prob^m)}{m} - (\mu - d/2) \\
& \geq (1-\gamma )\mu \cdot\left[1 - \frac{g(\mu,d)}{(\gamma \mu)^2 m}\right] - (\mu  -d/2).
\end{align*}

\textbf{Case 2:} $p_m \geq m(\mu - d/2)$. In this case, we consider the joint distribution $\Prob^m_{\alpha} = \Pi_{i=1}^m \mathbb{P}_{\alpha}$ with $\mathbb{P}_{\alpha}$ the two-point distribution as in \eqref{eq:two_point_mean_mad} supported on $\{x(\alpha),y(\alpha)\}$ with $\alpha \rightarrow 1$ (this happens independent of $m$, which is fixed). Note that
\begin{align*}
\sup_{\Prob^m \in \mathcal{P}(\mu,d)^m} \frac{\mathrm{OPT}(m,\Prob^m) - \mathrm{BUND}(p_m,\Prob^m)}{m}  \geq \frac{\mathrm{OPT}(m,\Prob^m_{\alpha}) - \mathrm{BUND}(p_m,\Prob^m_{\alpha})}{m}.
\end{align*}
Because $m$ is fixed, it follows that $\Prob^m_{\alpha}(\sum_{i=1}^m X_i \geq m(\mu-d/2)) \rightarrow 0$ as $\alpha \rightarrow 1$, since all probability mass cumulates on the smallest support point $mx(\alpha) < m(\mu-d/2)$. This means that $\mathrm{BUND}(p_m,\Prob^m_{\alpha}) \rightarrow 0$ as $\alpha \rightarrow 1$. 

Furthermore, we can lower bound $\mathrm{OPT}(m,\Prob^m_{\alpha})$ by the better option of two bundling mechanisms. Firstly, the bundling mechanism that sets the bundling price $p_{m}$ just below $mx(\alpha)$ yields a revenue of $\mu - d/2$ as $\alpha \rightarrow 1$. Secondly, if we choose the bundling price $p_{m}$ just below the second support point of $Y = \sum_{i=1}^m X_i$, namely $(m-1)x(\alpha) + y(\alpha)$, then the revenue would be
\begin{align*}
[(m-1)x(\alpha) + y(\alpha)]\Prob^m_{\alpha}\left(Y \geq (m-1)x(\alpha) + y(\alpha)\right) & = [(m-1)x(\alpha) + y(\alpha)]\left(1 - \Prob^m_{\alpha}\left(Y = mx(\alpha)\right)\right) \\
 &  = [(m-1)x(\alpha) + y(\alpha)](1 - \alpha^m)  \\
 & = \mu\frac{1 - \alpha^m}{m} + \frac{d}{2}\left[\frac{1-\alpha^m}{(1-\alpha)m} - \frac{(1-\alpha^m)(m-1)}{m\alpha}\right],
\end{align*}
where the last equality follows by the definitions of $x(\alpha)$ and $y(\alpha)$. Since $\alpha \rightarrow 1$, it holds that $$
\lim_{\alpha \rightarrow 1} \frac{1 - \alpha^m}{m} = 0, \ \  \lim_{\alpha \rightarrow 1}  \frac{1-\alpha^m}{(1-\alpha)m} = 1, \ \ \text{ and } \ \ \lim_{\alpha \rightarrow 1} \frac{(1-\alpha^m)(m-1)}{m\alpha} = 0,
$$
meaning that the revenue will approach $d/2$. From this it follows that $\lim_{\alpha \rightarrow 1} \mathrm{OPT}(m,\Prob^m_{\alpha}) \geq \max\{\mu-d/2,d/2\}$, and, hence
$$
\lim_{\alpha \rightarrow 1}  \frac{\mathrm{OPT}(m,\Prob^m_{\alpha}) - \mathrm{BUND}(p_m,\Prob^m_{\alpha})}{m}  \geq \max\{\mu-d/2,d/2\}.
$$
Combining Cases 1 and 2 yields that, for any $0 < \gamma < 1$,
\begin{align*}
&\inf_{p_m} \sup_{\Prob^m \in \mathcal{P}(\mu,d)^m} \frac{\mathrm{OPT}(m,\Prob^m) - \mathrm{BUND}(p_m,\Prob^m)}{m} \\
&\geq \min\left\{(1-\gamma )\mu \cdot\left[1 - \frac{g(\mu,d)}{(\gamma \mu)^2 m}\right] - (\mu  -d/2), \max\{\mu-d/2,d/2\}\right\},
\end{align*}
and then
\begin{align*}
\lim_{m\rightarrow \infty} \left[\inf_{p_m} \sup_{\Prob^m \in \mathcal{P}(\mu,d)^m} \frac{\mathrm{OPT}(m,\Prob^m) - \mathrm{BUND}(p_m,\Prob^m)}{m}\right] \\
\geq \min\left\{(1-\gamma )\mu - (\mu  -d/2), \max\{\mu-d/2,d/2\}\right\}.
\end{align*}
Because this holds for all $0 < \gamma < 1$, by letting $\gamma \rightarrow 0$ it follows that
\begin{align*}
\lim_{m\rightarrow \infty} \left[\inf_{p_m} \sup_{\Prob^m \in \mathcal{P}(\mu,d)^m} \frac{\mathrm{OPT}(m,\Prob^m) - \mathrm{BUND}(p_m,\Prob^m)}{m}\right] 
\geq \min\left\{d/2, \max\{\mu-d/2,d/2\}\right\} = d/2.
\end{align*}
This completes the proof.
\end{proof}

\section{Insights from the single-good setting}\label{sec:single_item}
We now depart from the multi-good setting and consider the single-good case. We characterize the optimal prices under expected revenue, competitive ratio, and absolute regret, and then prove Proposition \ref{th:3p_m1}.

Optimal prices for expected revenue \citep{roos2019chebyshev} and competitive ratio \citep{kleer2024distribution} are given by
\begin{align}
    p^*_{rev} = \mu\left(1 - \sqrt{\frac{d}{2\mu}}\right) \quad \text{and} \quad p^*_{cr} = \begin{cases}
        \mu\left(1 - \frac{\sqrt{d(8\mu+d)}-d}{4\mu}\right),\quad& d \leq (2\sqrt{5}-4)\mu,\\
        \mu\left(\frac{2\mu-d}{2\mu+d}\right),\quad& d \geq (2\sqrt{5}-4)\mu.
    \end{cases}
\end{align}

The optimal price for absolute regret has not previously been characterized in the literature. We therefore derive it from first principles with the full derivation provided in Appendix \ref{ap:regret}. We remark that when the seller is indifferent among prices, we select the smallest one. This tie-breaking rule serves only to preserve continuity of the optimal price without affecting any result or interpretation.

\begin{proposition}\label{prop:regret_m1}
    Let $\mu > 0$ and $0 \leq d < 2\mu$. Then
    \begin{align}\label{eq:regret_price}
    p^*_{ar}=\arg\inf_p\sup_{\bP\in\cP(\mu,d)}\textup{OPT}(\bP)-p\bP(X\geq p)=\begin{cases}
         \mu\left(\frac{2\mu-\sqrt{2\mu d}}{2\mu-d}\right), \quad& d \in [0,(3-\sqrt{5})\mu], \\
        \mu\left(\frac{d}{2\mu-d}\right), \quad& d \in [(3-\sqrt{5})\mu,\mu], \\
        \mu\left(\frac{\mu}{2\mu-d}\right), \quad& d \in [\mu,2\mu).
    \end{cases}
\end{align}
\end{proposition}

We stated earlier that all our results are decreasing in MAD, and that therefore one can replace the exact MAD $d$ by an upper bound without loss of generality. However, this is not true for Proposition \ref{prop:regret_m1}, which we use to prove Proposition \ref{th:3p_m1}. We therefore derive this result in Appendix \ref{ap:regret_m1_ub} for an upper bound on the MAD as well and show that this does not affect Proposition \ref{th:3p_m1}.

Let us now consider prices at or above $\mu-d/2$. In this case, Nature can choose a two-point distribution that places all probability mass below the posted price, resulting in zero expected revenue for the seller. Meanwhile, the optimal expected revenue under full information remains positive, albeit small. This outcome renders the competitive ratio equal to zero, even though the outcome is favorable under absolute regret because the absolute revenue loss is small. This clarifies why different objectives can lead to markedly different optimal prices. Moreover, Proposition \eqref{prop:regret_m1} establishes that $p^*_{ar} \geq \mu-d/2$ when $d \geq (3-\sqrt{5})\mu$. Combined with the preceding observation, this implies that a seller who optimizes for absolute regret over these MAD values has simultaneously zero expected revenue and a competitive ratio of zero. This explains the appeal of prices that perform well across multiple criteria, such as the bundling price \eqref{def:price} in the many-goods setting.

We now establish the proof of Proposition \ref{th:3p_m1}. First, we show that $p^*_{rev} < p^*_{cr}$ when $d \leq (2\sqrt{5}-4)\mu$. Hence, we need to show that $$\mu\left(1-\sqrt{\frac{d}{2\mu}}\right) < \mu\left(1-\frac{\sqrt{8d\mu+d^2}-d}{4\mu}\right),$$ which is equivalent to showing $$4\mu\sqrt{\frac{d}{2\mu}} = \sqrt{8d\mu} > \sqrt{8d\mu+d^2}-d.$$ This is always true, as $d > 0$.
    
Second, we show that $p^*_{rev} < p^*_{cr}$ when $d > (2\sqrt{5}-4)\mu$. Hence, we need to show that $$\mu\left(1-\sqrt{\frac{d}{2\mu}}\right) < \mu\left(\frac{2\mu-d}{2\mu+d}\right),$$ which is equivalent to showing $$\sqrt{\frac{d}{2\mu}} > \frac{2d}{2\mu+d}.$$ Squaring both sides and simplifying leads to $8d\mu < (2\mu+d)^2$, which is equivalent to $(2\mu-d)^2>0$.

Third, we show that $p^*_{cr} < p^*_{ar}$ when $d \leq (2\sqrt{5}-4)\mu$. Hence, we need to show that $$\frac{\sqrt{8d\mu+d^2}-d}{4\mu} > \frac{\sqrt{2d\mu}-d}{2\mu-d}.$$ However, notice that $$\frac{\sqrt{8d\mu+d^2}-d}{4\mu} > \frac{\sqrt{8d\mu}-d}{4\mu} = \frac{2\sqrt{2d\mu}-d}{4\mu}.$$ Hence, we can show $$\frac{2\sqrt{2d\mu}-d}{4\mu} > \frac{\sqrt{2d\mu}-d}{2\mu-d}.$$ instead. This is equivalent to $2\sqrt{2d\mu}<2\mu+d$, which in turn is equivalent to $d < \mu + \frac{d^2}{4\mu}$, which is always true, as we assumed $d \leq (2\sqrt{5}-4)\mu < \mu$.

Fourth, we show that $p^*_{cr} < p^*_{ar}$ when $d \in \left((2\sqrt{5}-4)\mu,\mu\right]$. Hence, we need to show that $$\frac{2\mu-d}{2\mu+d} < \frac{2\mu-\sqrt{2d\mu}}{2\mu-d},$$ which is equivalent to showing $$\sqrt{2d\mu}<2\mu-\frac{(2\mu-d)^2}{2\mu+d} = \frac{d(6\mu-d)}{2\mu+d}.$$ Squaring and division by $d$ results in $2\mu < \frac{d(6\mu-d)^2}{(2\mu+d)^2}$, which is equivalent to $$h(d) := 2\mu(2\mu+d)^2-d(6\mu-d)^2<0.$$ Since $h(d)$ is a third degree polynomial, it has at most three real solutions. One can verify that these solutions are $d_1 = (6-4\sqrt{2})\mu$, $d_2 = 2\mu$, and $d_3 = (6+2\sqrt{2})\mu$. Clearly, $d_1 <(2\sqrt{5}-4)\mu<\mu<d_2 < d_3$ and hence, it suffices to check whether $h(d) < 0$ for some $d \in (d_1,d_2)$. Choosing $d = \mu$ results in $h(d) = -7$, which proves our claim. Finally, note that when $d > (3-\sqrt{5})\mu$, then $p^*_{ar} > \tau_1 > p^*_{cr}$.

\section{Conclusion and future directions}\label{sec:conclusion}
Taken together, our findings indicate that heavy-tailed uncertainty with infinite variance sharply alters the revenue guarantees achievable by any mechanism, eliminating asymptotic full mean extraction. Yet even in this more demanding distributionally robust environment, deterministic bundling remains optimal in the many-goods limit.

We further analyze deterministic bundling through the lenses of competitive ratio and absolute regret. Our results show that these criteria admit a single, robust bundle price when restricting attention to the class of deterministic bundling when $m \xrightarrow{} \infty$. What remains unknown, however, is the speed of this convergence. Quantifying these convergence rates would provide valuable insights into how quickly high-dimensional aggregation smooths out the difference between these three objectives. It also remains unknown whether deterministic bundling is optimal among \textit{all} truthful mechanisms. Proving this, however, requires a different approach than the one used for the expected revenue objective, since the min-max result for competitive ratio (and the max-min result for absolute regret) are trivial. Any upper bound would therefore be uninformative.

Overall, deterministic bundling remains a powerful tool under heavy-tailed ambiguity, offering strong revenue guarantees and pricing consistency among objectives. Simultaneously, the open questions identified in this work point to rich directions for future research on distributional robustness and high-dimensional mechanism design.

\begin{small}
\bibliography{bibbook2}

@article{allouah2023optimal,
  title={Optimal pricing with a single point},
  author={Allouah, Amine and Bahamou, Achraf and Besbes, Omar},
  journal={Management Science},
  volume={69},
  number={10},
  pages={5866--5882},
  year={2023},
  publisher={INFORMS}
}

@article{allouah2020prior,
  title={Prior-independent optimal auctions},
  author={Allouah, Amine and Besbes, Omar},
  journal={Management Science},
  volume={66},
  number={10},
  pages={4417--4432},
  year={2020},
  publisher={INFORMS}
}

@inproceedings{anunrojwong2022robustness,
  title={On the robustness of second-price auctions in prior-independent mechanism design},
  author={Anunrojwong, Jerry and Balseiro, Santiago and Besbes, Omar},
  booktitle={Proceedings of the 23rd ACM Conference on Economics and Computation},
  pages={151--152},
  year={2022}
}

@article{anunrojwong2024best,
  title={The Best of Many Robustness Criteria in Decision Making: Formulation and Application to Robust Pricing},
  author={Anunrojwong, Jerry and Balseiro, Santiago R and Besbes, Omar},
  journal={arXiv preprint arXiv:2403.12260},
  year={2024}
}

@techreport{azar2012optimal,
    author = {Azar, Pablo and Micali, Silvio},
    title = {Optimal parametric auctions},
    institution = {MIT-CSAIL-TR-2012-011},
    year = {2012}
}

@article{bachrach2022distributional,
  title={Distributional robustness: From pricing to auctions},
  author={Bachrach, Nir and Talgam-Cohen, Inbal},
  journal={arXiv preprint arXiv:2205.09008},
  year={2022}
}

@article{bakos1999bundling,
  title={Bundling information goods: Pricing, profits, and efficiency},
  author={Bakos, Yannis and Brynjolfsson, Erik},
  journal={Management science},
  volume={45},
  number={12},
  pages={1613--1630},
  year={1999},
  publisher={INFORMS},
}

@article{bergemann2011robust,
  title={Robust monopoly pricing},
  author={Bergemann, Dirk and Schlag, Karl},
  journal={Journal of Economic Theory},
  volume={146},
  number={6},
  pages={2527--2543},
  year={2011},
  publisher={Elsevier}
}

@article{bergemann2008pricing,
  title={Pricing without priors},
  author={Bergemann, Dirk and Schlag, Karl H},
  journal={Journal of the European Economic Association},
  volume={6},
  number={2-3},
  pages={560--569},
  year={2008},
  publisher={Oxford University Press}
}

@article{bimpikis2016inventory,
  title={Inventory pooling under heavy-tailed demand},
  author={Bimpikis, Kostas and Markakis, Mihalis G},
  journal={Management Science},
  volume={62},
  number={6},
  pages={1800--1813},
  year={2016},
  publisher={INFORMS}
}

@article{borwein1986fan,
  title={On Fan's minimax theorem},
  author={Borwein, Jonathan M and Zhuang, D},
  journal={Mathematical programming},
  volume={34},
  pages={232--234},
  year={1986},
  publisher={Springer-Verlag}
}

@inproceedings{brustle2020,
author = {Brustle, Johannes and Cai, Yang and Daskalakis, Constantinos},
title = {Multi-Item Mechanisms without Item-Independence: Learnability via Robustness},
year = {2020},
isbn = {9781450379755},
publisher = {Association for Computing Machinery},
address = {New York, NY, USA},
url = {https://doi.org/10.1145/3391403.3399541},
doi = {10.1145/3391403.3399541},
booktitle = {Proceedings of the 21st ACM Conference on Economics and Computation},
pages = {715–761},
numpages = {47},
keywords = {Bayesian networks, Markov random fields, beyond item-independence, multi-item auctions, revenue maximization, robustness, sample complexity},
location = {Virtual Event, Hungary},
series = {EC '20}
}

@inproceedings{cai2012algorithmic,
  title={An algorithmic characterization of multi-dimensional mechanisms},
  author={Cai, Yang and Daskalakis, Constantinos and Weinberg, S Matthew},
  booktitle={Proceedings of the forty-fourth annual ACM symposium on Theory of computing},
  pages={459--478},
  year={2012}
}

@article{cai2019duality,
  title={A duality-based unified approach to bayesian mechanism design},
  author={Cai, Yang and Devanur, Nikhil R and Weinberg, S Matthew},
  journal={SIAM Journal on Computing},
  volume={50},
  number={3},
  pages={STOC16--160},
  year={2019},
  publisher={SIAM}
}

@article{carrasco2018optimal,
  title={Optimal selling mechanisms under moment conditions},
  author={Carrasco, Vinicius and Luz, Vitor Farinha and Kos, Nenad and Messner, Matthias and Monteiro, Paulo and Moreira, Humberto},
  journal={Journal of Economic Theory},
  volume={177},
  pages={245--279},
  year={2018},
  publisher={Elsevier}
}

@article{carroll2017robustness,
  title={Robustness and separation in multidimensional screening},
  author={Carroll, Gabriel},
  journal={Econometrica},
  volume={85},
  number={2},
  pages={453--488},
  year={2017},
  publisher={Wiley Online Library}
}

@article{carroll2019robustness,
  title={Robustness in mechanism design and contracting},
  author={Carroll, Gabriel},
  journal={Annual Review of Economics},
  volume={11},
  pages={139--166},
  year={2019},
  publisher={Annual Reviews}
}

@inproceedings{che2021multiplegoods,
author = {Che, Yeon-Koo and Zhong, Weijie},
title = {Robustly-Optimal Mechanism for Selling Multiple Goods},
year={2021},
booktitle = {Proceedings of the 22nd ACM Conference on Economics and Computation},
pages={314-315},
numpages={2},
}

@article{chen2022distribution,
  title={Distribution-free pricing},
  author={Chen, Hongqiao and Hu, Ming and Perakis, Georgia},
  journal={Manufacturing \& Service Operations Management},
  volume={24},
  number={4},
  pages={1939--1958},
  year={2022},
  publisher={INFORMS}
}

@article{chen2023screening,
  title={Screening with limited information: A dual perspective},
  author={Chen, Zhi and Hu, Zhenyu and Wang, Ruiqin},
  journal={Operations Research},
  volume={72},
  number={4},
  pages={1487--1504},
  year={2024},
  publisher={INFORMS}
}

@article{clauset2009power,
  title={Power-law distributions in empirical data},
  author={Clauset, Aaron and Shalizi, Cosma Rohilla and Newman, Mark EJ},
  journal={SIAM review},
  volume={51},
  number={4},
  pages={661--703},
  year={2009},
  publisher={SIAM}
}

@article{das2021heavy,
  title={On the heavy-tail behavior of the distributionally robust newsvendor},
  author={Das, Bikramjit and Dhara, Anulekha and Natarajan, Karthik},
  journal={Operations Research},
  volume={69},
  number={4},
  pages={1077--1099},
  year={2021},
  publisher={INFORMS},
}

@article{daskalakis2017strong,
  title={Strong Duality for a Multiple-Good Monopolist},
  author={Daskalakis, Constantinos and Deckelbaum, Alan and Tzamos, Christos and others},
  journal={Econometrica},
  volume={85},
  pages={735--767},
  year={2017},
  publisher={Econometric Society}
}

@inproceedings{deb2021multi,
  title={Multi-dimensional screening: buyer-optimal learning and informational robustness},
  author={Deb, Rahul and Roesler, Anne-Katrin},
  booktitle={Proceedings of the 22nd ACM Conference on Economics and Computation},
  pages={343--344},
  year={2021}
}

@article{elmachtoub2021value,
  title={The value of personalized pricing},
  author={Elmachtoub, Adam N and Gupta, Vishal and Hamilton, Michael L},
  journal={Management Science},
  volume={67},
  number={10},
  pages={6055--6070},
  year={2021},
  publisher={INFORMS}
}

@article{eren2010monopoly,
  title={Monopoly pricing with limited demand information},
  author={Eren, Serkan S and Maglaras, Costis},
  journal={Journal of Revenue and Pricing Management},
  volume={9},
  pages={23--48},
  year={2010},
  publisher={Palgrave Macmillan UK}
}

@article{gaffeo2008demand,
  title={Demand distribution dynamics in creative industries: The market for books in Italy},
  author={Gaffeo, Edoardo and Scorcu, Antonello E and Vici, Laura},
  journal={Information Economics and Policy},
  volume={20},
  number={3},
  pages={257--268},
  year={2008},
  publisher={Elsevier}
}

@article{ge2025optimal,
  title={Optimal Robust Pricing with Minimal Information},
  author={Ge, Yan and He, Simai and Wang, Zhen and Wang, Zizhuo},
  journal={Available at SSRN 5272309},
  year={2025}
}

@article{giannakopoulos2018duality,
  title={Duality and optimality of auctions for uniform distributions},
  author={Giannakopoulos, Yiannis and Koutsoupias, Elias},
  journal={SIAM Journal on Computing},
  volume={47},
  number={1},
  pages={121--165},
  year={2018},
  publisher={SIAM}
}

@article{giannakopoulos2023ratio,
author = {Giannakopoulos, Yiannis and Pocas, Diogo and Tsigonias-Dimitriadis, Alexandros},
title = {Robust Revenue Maximization Under Minimal Statistical Information},
year = {2023},
issue_date = {September 2022},
publisher = {Association for Computing Machinery},
address = {New York, NY, USA},
volume = {10},
number = {3},
issn = {2167-8375},
url = {https://doi.org/10.1145/3546606},
doi = {10.1145/3546606},
journal = {ACM Trans. Econ. Comput.},
month = {feb},
articleno = {11},
numpages = {34},
keywords = {Pricing, optimal auctions, revenue maximization, robust optimization, parametric auctions},
}

@article{haghpanah2021pure,
  title={When is pure bundling optimal?},
  author={Haghpanah, Nima and Hartline, Jason},
  journal={The Review of Economic Studies},
  volume={88},
  number={3},
  pages={1127--1156},
  year={2021},
  publisher={Oxford University Press},
}

@article{hart2017approximate,
  title={Approximate revenue maximization with multiple items},
  author={Hart, Sergiu and Nisan, Noam},
  journal={Journal of Economic Theory},
  volume={172},
  pages={313--347},
  year={2017},
  publisher={Elsevier}
}

@article{ibragimov2010optimal,
  title={Optimal bundling strategies under heavy-tailed valuations},
  author={Ibragimov, Rustam and Walden, Johan},
  journal={Management Science},
  volume={56},
  number={11},
  pages={1963--1976},
  year={2010},
  publisher={INFORMS}
}

@article{kleer2024distribution,
  title={Distribution-free expectation operators for robust pricing and stocking with heavy-tailed demand},
  author={Kleer, Pieter and van Leeuwaarden, Johan SH and Verseveldt, Bas},
  journal={arXiv preprint arXiv:2409.17962},
  year={2024}
}

@article{koccyiugit2020distributionally,
  title={Distributionally robust mechanism design},
  author={{Ko{\c c}yi{\u g}it}, {{\c C}a{\u g}{\i}l} and Iyengar, Garud and Kuhn, Daniel and Wiesemann, Wolfram},
  journal={Management Science},
  volume={66},
  number={1},
  pages={159--189},
  year={2020},
  publisher={INFORMS}
}

@article{li2013revenue,
  title={On revenue maximization for selling multiple independently distributed items},
  author={Li, Xinye and Yao, Andrew Chi-Chih},
  journal={Proceedings of the National Academy of Sciences},
  volume={110},
  number={28},
  pages={11232--11237},
  year={2013},
  publisher={National Acad Sciences}
}

@article{manelli2006bundling,
  title={Bundling as an optimal selling mechanism for a multiple-good monopolist},
  author={Manelli, Alejandro M and Vincent, Daniel R},
  journal={Journal of Economic Theory},
  volume={127},
  number={1},
  pages={1--35},
  year={2006},
  publisher={Elsevier}
}

@article{menicucci2015optimality,
  title={On the optimality of pure bundling for a monopolist},
  author={Menicucci, Domenico and Hurkens, Sjaak and Jeon, Doh-Shin},
  journal={Journal of Mathematical Economics},
  volume={60},
  pages={33--42},
  year={2015},
  publisher={Elsevier}
}

@inbook{Milgrom2004, 
place={Cambridge}, 
series={Churchill Lectures in Economics}, title={Interdependence of Types and Values}, 
DOI={10.1017/CBO9780511813825.009}, 
booktitle={Putting Auction Theory to Work}, publisher={Cambridge University Press}, 
author={Milgrom, Paul}, 
year={2004},
pages={157--207},
collection={Churchill Lectures in Economics},
}

@article{myerson1981optimal,
  title={Optimal auction design},
  author={Myerson, Roger B},
  journal={Mathematics of Operations Research},
  volume={6},
  number={1},
  pages={58--73},
  year={1981},
  publisher={INFORMS}
}

@article{natarajan2018asymmetry,
  title={Asymmetry and ambiguity in newsvendor models},
  author={Natarajan, Karthik and Sim, Melvyn and Uichanco, Joline},
  journal={Management Science},
  volume={64},
  number={7},
  pages={3146--3167},
  year={2018},
  publisher={INFORMS}
}

@article{pavlov2011optimal,
  title={Optimal mechanism for selling two goods},
  author={Pavlov, Gregory},
  journal={The BE Journal of Theoretical Economics},
  volume={11},
  number={1},
  pages={0000102202193517041664},
  year={2011},
  publisher={De Gruyter},
}

@article{riley1983optimal,
  title={Optimal selling strategies: When to haggle, when to hold firm},
  author={Riley, John and Zeckhauser, Richard},
  journal={The Quarterly Journal of Economics},
  volume={98},
  number={2},
  pages={267--289},
  year={1983},
  publisher={MIT Press}
}

@article{riley1981optimal,
  title={Optimal auctions},
  author={Riley, John G and Samuelson, William F},
  journal={The American Economic Review},
  volume={71},
  number={3},
  pages={381--392},
  year={1981},
  publisher={JSTOR}
}

@article{roos2019chebyshev,
  title={Tight tail probability bounds for distribution-free decision making},
  author={Roos, Ernst and Brekelmans, Ruud and van Eekelen, Wouter and den Hertog, Dick and van Leeuwaarden, Johan SH},
  journal={European Journal of Operational Research},
  year={2021},
  publisher={Elsevier}
}

@article{scarf1958min,
  title={A min-max solution of an inventory problem},
  author={Scarf, Herbert},
  journal={Studies in the mathematical theory of inventory and production},
  year={1958},
  publisher={Stanford Univ. Press},
}

@misc{suzdaltsev2018distributionally,
  title={Distributionally robust pricing in auctions},
  author={Suzdaltsev, Alex},
  howpublished={Available at SSRN 3119305},
  year={2018}
}

@article{wang2025power,
  title={The power of simple menus in robust selling mechanisms},
  author={Wang, Shixin},
  journal={Management Science},
  volume={71},
  number={6},
  pages={5268--5287},
  year={2025},
  publisher={INFORMS}
}

@incollection{Wilson1987,
  author    = {Robert Wilson},
  title     = {Game-theoretic analyses of trading processes},
  booktitle = {Advances in Economic Theory: Fifth World Congress},
  editor    = {Truman F. Bewley},
  publisher = {Cambridge University Press},
  year      = {1987},
  pages     = {33--70},
  series    = {Econometric Society Monographs}
}
\end{small}

\newpage
\appendix
\section{Minimax analysis for $\Ptwomud$ when $\mu \leq d < 2\mu$}
\label{app:two-point_minimax}
\begin{proposition}
For $\mu \leq d < 2\mu$ and $X_1,\dots,X_m$ i.i.d., it holds that
\begin{align}
\lim_{m \rightarrow \infty} \inf_{\Prob^m \in \mathcal{P}_2(\mu,d)^m}\sup_{D \in \mathcal{D}_m} \ \, \frac{ \mathrm{REV}(D,\Prob^m)}{m}   & \geq \lim_{m \rightarrow \infty} \inf_{\Prob^m \in \mathcal{P}_2(\mu,d)^m}\sup_{p_m} \ \, \frac{ p_m \cdot \Prob^m(\sum_{i=1}^{m} X_i \geq p_m)}{m} \geq  \mu - \frac{d}{2} + \xi
\label{eq:asymptotic_p_main}
\end{align}
with $\xi = \xi(\mu,d) > 0$ some small number.
\label{cor:asymptotic_minmax}
\end{proposition}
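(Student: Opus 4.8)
The first inequality is immediate, since every bundling mechanism with price $p_m$ is a deterministic truthful mechanism, so $\sup_{D\in\mathcal{D}_m}\mathrm{REV}(D,\Prob^m)\ge \sup_{p_m} p_m\,\Prob^m(\sum_i X_i\ge p_m)$; it therefore suffices to prove the second inequality. Because the $X_i$ are i.i.d., nature's choice reduces to a single parameter $\alpha\in[d/2\mu,1)$ in \eqref{eq:two_point_mean_mad}, and by \eqref{eq:sum_Xi} the sum $Y=\sum_i X_i$ is supported on $s_j:=(m-j)x(\alpha)+jy(\alpha)$, $j=0,\dots,m$, with $\Prob^m_\alpha(Y\ge s_j)=\Prob(K\ge j)$ where $K\sim\mathrm{Bin}(m,1-\alpha)$. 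Since the best bundling price always sits at a support point, the seller's value is $\max_{0\le j\le m}\rho_j(\alpha)$ with $\rho_j(\alpha):=\tfrac{s_j}{m}\Prob(K\ge j)$. Writing $A:=\mu-d/2$, my goal is to show $\inf_\alpha\max_j\rho_j(\alpha)\ge A+\xi$ for all large $m$, and I would do this using only two candidate prices: the threshold $s_1$, and a concentration price $(1-\gamma)m\mu$. I split the analysis according to the size of $\beta:=m(1-\alpha)=\E K$.

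In the regime $\beta\ge B$ (with $B$ a constant depending on $\mu,d$) I would price at $(1-\gamma)m\mu$ and apply Chebyshev (Lemma \ref{lem:chebyshev}) with per-good variance $\mathrm{Var}(X(\alpha))=\tfrac{d^2}{4\alpha(1-\alpha)}$. Using $\alpha\ge d/2\mu$ and $\alpha(1-\alpha)m=\alpha\beta$, the tail bound becomes $\Prob(Y\ge(1-\gamma)m\mu)\ge 1-\tfrac{d}{2\beta\gamma^2\mu}$, so the normalized revenue is at least $(1-\gamma)\mu\bigl(1-\tfrac{d}{2B\gamma^2\mu}\bigr)$. Choosing first $\gamma=\gamma(\mu,d)$ small and then $B=B(\mu,d)$ large makes this at least $\mu-d/4=A+d/4$, uniformly over all such $\alpha$ and all $m$.

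The delicate regime is $\beta\le B$ (i.e.\ $\alpha\to1$), where concentration is useless and the only helpful price is $s_1$, with $\rho_1(\alpha)=\tfrac{s_1}{m}(1-\alpha^m)$. Here I would pass to the Poisson limit: as $m\to\infty$ with $\beta$ fixed, $K\Rightarrow\mathrm{Pois}(\beta)$, $x(\alpha)\to A$, $\tfrac{y(\alpha)-x(\alpha)}{m}\to\tfrac{d}{2\beta}$ and $\alpha^m\to e^{-\beta}$, so $\rho_1(\alpha)\to(A+\tfrac{d}{2\beta})(1-e^{-\beta})$. The key identity is
\[
\Bigl(A+\tfrac{d}{2\beta}\Bigr)(1-e^{-\beta})-A=\Bigl(\tfrac{d}{2}\,\tfrac{e^{\beta}-1}{\beta}-A\Bigr)e^{-\beta}\ \ge\ (d-\mu)\,e^{-\beta},
\]
using $\tfrac{e^{\beta}-1}{\beta}\ge1$; hence on $(0,B]$ the limiting margin is at least $(d-\mu)e^{-B}>0$. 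A routine estimate shows the convergence is uniform on $\beta\in(0,B]$ with error $O(B/m)$ (the apparent blow-up of $A+\tfrac{d}{2\beta}$ as $\beta\to0$ is cancelled by $e^{-\beta}-\alpha^m=O(\beta^2/m)$), so for $m$ large $\rho_1(\alpha)\ge A+\tfrac12(d-\mu)e^{-B}$ throughout $\beta\le B$. Combining the two regimes yields the claim with $\xi=\min\{d/4,\tfrac12(d-\mu)e^{-B}\}$.

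The main obstacle is exactly this small-$\beta$ boundary: the guarantee from $s_1$ degrades to $A$ as $\beta\to0$, and the positive margin is precisely $(d-\mu)e^{-B}$, which is strictly positive only because $d>\mu$. At the endpoint $d=\mu$ the value collapses to $A=\mu-d/2$, so the strict gap $\xi>0$ genuinely needs $\mu<d<2\mu$; and making the Poisson approximation uniform all the way down to $\beta\to0$ (rather than for each fixed $\beta$) is the one step where care is required.
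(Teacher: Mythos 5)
Your proof is correct and follows essentially the same route as the paper: the same two candidate prices (the second support point $(m-1)x(\alpha)+y(\alpha)$ for the concentrated regime and the Chebyshev price $(1-\gamma)m\mu$ otherwise), the same dichotomy in $\beta = m(1-\alpha)$, and your Poisson-limit value $\left(A+\tfrac{d}{2\beta}\right)\left(1-e^{-\beta}\right)$ is exactly the paper's $g_{\mu,d}(\lambda)$ with $\lambda = 1/\beta$. The only differences are refinements in your favor: your uniform finite-$m$ two-regime split replaces the paper's three asymptotic cases for $q(m)/m$ (sidestepping any subsequence issues when that ratio does not converge), and your identity bounding the margin below by $(d-\mu)e^{-\beta}$ makes explicit the constant $\xi_1$ whose existence the paper only asserts.
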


\begin{proof}
As we are considering two-point distributions in an i.i.d. setting, the choice of $\Prob^m$ can be characterized by the choice of $\alpha = \alpha(m)$ in \eqref{eq:two_point_mean_mad}. Note that if the infimum is not attained we take a sequence $(\alpha(m))_{m \in \N}$ close enough to attaining the infimum, i.e., for which $\sup_{p} \ \, p \cdot \Prob^m_{\alpha(m)}(\sum_{i=1}^{m}X_i \geq p) \leq \delta +  \inf_{\Prob^m \in \mathcal{P}(\mu,d)}\sup_{p} \ \, p \cdot \Prob^m(\sum_{i=1}^{m}X_i \geq p)$ for an arbitrary $\delta > 0$. We write $\Prob^m = \Prob^m_{\alpha(m)}$ to emphasize this, and focus on  bounding
\begin{align}
\lim_{m \rightarrow \infty} \sup_{p_m} \ \, \frac{p_m \cdot \Prob^m_{\alpha(m)}(\sum_{i=1}^{m}X_i \geq p_m)}{m}.
\label{eq:inner_sup}
\end{align}
We analyse different regimes of growth of the sequence  $(\alpha(m))_{m \in \N}$, that together capture all possibilities.  To make notation a bit more convenient, we will write $\alpha(m) = 1 - 1/q(m)$ and make our case distinction in terms of $q(m)$.

We will show that for any growth regime of $q(m)$ chosen by nature, the seller can always guarantee a revenue of either $d/2$ or $\mu - d/2 + \xi$ for some small $\xi > 0$, in case $\mu < d < 2\mu$.  This yields the lower bound in \eqref{eq:mu-d/2} of $\mu - d/2 + \xi$, as this quantity is smaller than $d/2$. There will be three cases, that together capture all possibilities. We emphasize that at this point we assume $\mu < d < 2\mu$.
\begin{itemize}
\item Case 1: $\lim_{m \rightarrow \infty} q(m)/m = \infty$. \textit{We show the seller can guarantee a revenue of at least $d/2$.}
\item Case 2: $\lim_{m \rightarrow \infty} q(m)/m = 0$. \textit{We show the seller can guarantee a revenue of at least $d/2$.}
\item Case 3: $\lim_{m \rightarrow \infty} q(m)/m = \lambda$ for constant $\lambda > 0$.
\begin{itemize}
\item \textit{If $\lambda$ is sufficiently small, we show  the seller can guarantee a revenue of at least $d/2$.}
\item \textit{If $\lambda$ is not small, we show  the seller can guarantee a revenue of at least $\mu - d/2 + \xi$.}
\end{itemize}
\end{itemize} \medskip
The "bottleneck" in the analysis, that leads to only being able to guarantee $\mu - d/2 + \xi$ for the seller is caused by the second subcase of Case 3.

\textit{Case 1: $\lim_{m \rightarrow \infty} q(m)/m = \infty$.} We use a similar argument as that for the upper bound of $\max\{\mu-d/2,d/2\}$. Setting the price $p_m$ just below the second support point 
$(m-1)x(\alpha) + y(\alpha)$ gives a revenue of 
$$
\approx (1-\alpha^m)\left(\mu - \frac{d}{2}\right) + \frac{(1-\alpha^m)}{m(1-\alpha)}\frac{d}{2}.
$$
For any choice of $q(m)$ for which $q(m)/m \rightarrow \infty$ as $m \rightarrow \infty$, it can be shown that 
$$
(1 - \alpha^m) \rightarrow 0 \ \text{ and } \ (1 - \alpha^m)/(m(1-\alpha)) \rightarrow 1.
$$ 
This shows that the seller can guarantee a revenue of $d/2$ as $m \rightarrow \infty$ in this case. Since $d > \mu$, we can find $\xi_0(\mu,d) > 0$ such that $d/2 \geq \mu - d/2 + \xi_0$. \medskip

\textit{Case 2:  $\lim_{m \rightarrow \infty} q(m)/m = 0$.} Note that for $X(\alpha)$ as in \eqref{eq:two_point_mean_mad}, we have
$$
\text{Var}(X(\alpha)) = \alpha\left(\mu - \frac{d}{2\alpha} - \mu\right)^2 + (1-\alpha)\left(\mu - \frac{d}{2(1-\alpha)} - \mu\right)^2 = \frac{d^2}{4\alpha} + \frac{d^2}{4(1-\alpha)}  \approx \frac{d^2}{4}\left(1 + q(m)\right).
$$
The Chebyshev bound in Lemma \ref{lem:chebyshev} then tells us that
$$
\Prob^m_{\alpha(m)}\left(\sum_{i=1}^m X_i \leq (1-\gamma)m\mu\right) \leq  \frac{\text{Var}(X(\alpha))}{(\gamma\mu)^2m} = \frac{d^2(1+q(m))}{(2\gamma\mu)^2m} \rightarrow 0,
$$
as $m \rightarrow \infty$ by the assumption $\lim_{m \rightarrow \infty} q(m)/m = 0$. This means that if for $\gamma > 0$ the seller sets a price of $p_m = (1-\gamma)m\mu$, she can guarantee a revenue of roughly $p_m$ (as we sell with probability approaching $1$), i.e., 
$$
\lim_{m\rightarrow \infty} \sup_{p} \ \, \frac{p \cdot \Prob^m_{\alpha(m)}(\sum_{i=1}^{m}X_i \geq p)}{m} \geq (1-\gamma)\mu
$$
for any fixed $\gamma > 0$. If we choose $\gamma$ close enough to zero, we have $(1-\gamma)\mu \geq d/2 \geq \mu-d/2 + \xi_0$. \medskip

\textit{Case 3: $\lim_{m \rightarrow \infty} q(m)/m = \lambda$ for a fixed constant $\lambda > 0$.} 
First, we argue that if $\lambda$ is sufficiently small, we can still use Chebyshev's bound in order to argue that we can obtain a revenue somewhat close to $\mu$. Let $0 < \gamma < 1$, such that
\begin{align}
0.99(1- \gamma)\mu \geq \frac{d}{2},
\label{eq:gamma}
\end{align}
which is possible as $d < 2\mu$, and choose $\tau_0 > 0$ such that 
\begin{align}
1 - \left(\frac{d}{2\gamma\mu} \right)^{2}\tau_0 \geq 0.99.
\label{eq:tau}
\end{align}
Then for any $\lambda \in (0,\tau_0]$, Chebyshev's bound tells us that 
\begin{align*}
\lim_{m \rightarrow \infty} \frac{(1-\gamma)m\mu \cdot \displaystyle \Prob^m_{\alpha(m)}\left(\sum_{i=1}^m X_i \geq (1-\gamma)m\mu\right)}{m} & \geq  \lim_{m \rightarrow \infty} (1-\gamma)\mu \left(1 - \frac{d^2(1+q(m))}{(2\gamma\mu)^2m} \right) \\
 & = (1-\gamma)\mu \left(1 - \frac{d^2}{(2\gamma\mu)^2}\lambda \right) \\
  & \geq  (1-\gamma)\mu \left(1 - \frac{d^2}{(2\gamma\mu)^2}\tau_0 \right)\\
  & \geq 0.99(1-\gamma)\mu \geq \frac{d}{2} > \mu - d/2 + \xi_0,
\end{align*}
where the final two inequalities use the choice of $\gamma$ in \eqref{eq:gamma} and $\tau_0$ in \eqref{eq:tau}, respectively. This means that for $\lambda \in (0,\tau_0]$, the seller can guarantee a revenue of at least $d/2 \geq \mu - \frac{d}{2} + \xi_0$.

For $\lambda \in [\tau_0,\infty)$, we will again use a similar revenue analysis as in Case 1. 
We will argue that if we set a price just below $(m-1)x(\alpha) + y(\alpha)$, then the seller can guarantee a revenue strictly larger than $\mu - d/2$ (independent of $\lambda$). We have
\begin{align}
\frac{[(m-1)x(\alpha) + y(\alpha)] \cdot \Prob^m_{\alpha(m)}(\sum_{i=1}^{m}X_i \geq (m-1)x(\alpha) + y(\alpha))}{m} 
& \approx (1-\alpha^m)\left(\mu - \frac{d}{2}\right) + \frac{(1-\alpha^m)}{m(1-\alpha)}\frac{d}{2}.
\label{eq:calc_k2_case3}
\end{align}
Note that 
$$
1 - \alpha^m = 1 - \left(1 - \frac{1}{q(m)}\right)^m \rightarrow 1 - e^{-1/\lambda},
$$
as $m \rightarrow \infty$, using $\lim_{n \rightarrow \infty} (1 + x/n)^n = e^x$, and similarly 
$$
\frac{(1-\alpha^m)}{m(1-\alpha)} \rightarrow \lambda \left(1 - e^{-1/\lambda}\right),
$$
as $m \rightarrow \infty$, so that 
\begin{align*}
\lim_{m \rightarrow \infty} \frac{[(m-1)x(\alpha) + y(\alpha)] \cdot \Prob^m_{\alpha(m)}(\sum_{i=1}^{m}X_i \geq (m-1)x(\alpha) + y(\alpha))}{m} \\
= \left(1 - e^{-1/\lambda}\right)\left(\mu + (\lambda-1)\frac{d}{2}\right) =: g_{\mu,d}(\lambda).
\end{align*}
The function $g_{\mu,d}(\lambda)$ has the property that
$$
g_{\mu,d}(\lambda) \rightarrow 
\left\{\begin{array}{ll}
\mu - d/2 & \text{ if } \lambda \rightarrow 0\\
d/2 & \text{ if } \lambda \rightarrow \infty
\end{array}\right.,
$$
and it remains bounded away from $\mu - d/2$ on $[\tau_0,\infty)$. To be precise, we can find a $\xi_1(\mu,d,\tau_0)$ such that
$$
\left(1 - e^{-1/\lambda}\right)\left(\mu + (\lambda-1)\frac{d}{2}\right) \geq  \mu - \frac{d}{2} + \xi_1(\mu,d,\tau_0)
$$
for all $\lambda \in [\tau_0,\infty)$.
We can then define $\xi = \min\{\xi_0,\xi_1\}$.
\end{proof}

\section{Proof of Proposition \ref{prop:regret_m1}}\label{ap:regret}
The proof is structured as follows: First, we specify the probability distributions used in the proof. Second, for fixed $p$, we solve the inner maximization problem over all feasible probability distributions. Finally, we solve the outer minimization problem over all $p$.

\subsection{Probability distributions (primal solutions)}\label{ap:pdps}
We introduce (and restate) the probability distributions that play a role in Proposition \ref{prop:regret_m1}. First, we parametrize the class of two-point distributions with fixed mean-MAD.
\begin{definition}\label{def:2pd}
    Consider a two-point distribution with given mean $\mu$ and mean absolute deviation $d$. Denote $\tau_1 = \mu - \frac{d}{2}$ and $\tau_2 = \mu+\frac{d\mu}{2\mu-d}$. Then
\begin{align}
\bP^x_{2}(x) = \left\{ \begin{array}{ll}
x & \text{ w.p. } v_1 = \frac{d}{2(\mu-x)} \\
y(x) = \mu+\frac{d(\mu-x)}{2(\mu-x)-d} & \text{ w.p. } v_2 = 1-\frac{d}{2(\mu-x)}
\end{array}\right., \textit{ for } x \in [0,\tau_1),
\end{align}
and
\begin{align}
\bP^y_{2}(y) = \left\{ \begin{array}{ll}
x(y) = \mu-\frac{d(y-\mu)}{2(y-\mu)-d} & \text{ w.p. } v_1 = 1-\frac{d}{2(y-\mu)} \\
y & \text{ w.p. } v_2 = \frac{d}{2(y-\mu)} 
\end{array}\right., \textit{ for } y \in [\tau_2,\infty).
\end{align}
\end{definition}

Notice that distribution \eqref{eq:two_point_mean_mad}, $\bP^x_{2}(x)$, and $\bP^y_{2}(y)$ are the same distributions with different parametrizations. This distinction is convenient in the proof of Proposition \ref{prop:regret_m1}. Next, we parametrize all possible three-point distributions in $\cP(\mu,d)$.
\begin{definition}\label{def:3pd}
Consider a three-point distribution with given mean $\mu$ and mean absolute deviation $d$. Denote $\bar{\tau}_1(z) = \mu - \frac{d(z - \mu)}{2(z - \mu) - d}$ and $\bar{\tau}_2(x) = \mu + \frac{d(\mu - x)}{2(\mu - x) - d}$. Then
\begin{align}
\bP^-_{3}(x,y,z) = \left\{ \begin{array}{ll}
x & \text{ w.p. } w^-_1 = \frac{d(z-y)-2(\mu-y)(z-\mu)}{2(z-\mu)(y-x)} \\
y & \text{ w.p. } w^-_2 = \frac{2(\mu-x)(z-\mu)-d(z-x)}{2(z-\mu)(y-x)} \\
z & \text{ w.p. } w^-_3 = \frac{d}{2(z-\mu)}
\end{array}
\right.
\end{align}
with $x \in [0,\bar{\tau}_1(z))$, $y\in[\bar{\tau}_1(z),\mu]$, and $z \in [\tau_2,\infty)$, and
\begin{align}
\bP^+_{3}(x,y,z) = \left\{ \begin{array}{ll}
x & \text{ w.p. } w^+_1 = \frac{d}{2(\mu - x)} \\
y & \text{ w.p. } w^+_2 = \frac{d(z - x)-2(z-\mu)(\mu - x)}{2(\mu - x)(y-z)} \\
z & \text{ w.p. } w^+_3 = \frac{2(y-\mu)(\mu - x)-d(y-x)}{2(\mu - x)(y-z)}
\end{array}\right.
\end{align}
with $x \in [0,\tau_1)$, $y\in[\mu,\bar{\tau}_2(x))$, and $z \in [\bar{\tau}_2(x),\infty)$.
\end{definition}

\subsection{Inner maximization problem}
\begin{lemma}\label{lem:sup_ar}
    Let $p > 0$ be fixed, and denote $\tau_1 = \mu - \frac{d}{2}$, $\tau_2 = \mu+\frac{d\mu}{2\mu-d}$. Then
    \begin{align}\label{eq:sup_ar2}
    \sup_{\bP\in\cP(\mu,d)}\textup{OPT}(\bP)-p\bP(X\geq p) = \begin{cases}
        \max\left\{\frac{dp}{2(\mu-p)},\mu-p+\frac{dp}{2\mu}\right\}, \quad& p \in (0,\tau_1], \\
        \max\left\{\mu-\frac{d}{2},\mu-p+\frac{dp}{2\mu}\right\}, \quad& p \in [\tau_1,\mu], \\
        \max\left\{\left(\mu-\frac{d}{2}\right)\frac{p}{\mu},\mu-p+\frac{dp}{2\mu}\right\}, \quad& p \in [\mu,\tau_2], \\
        \mu, \quad& p \in [\tau_2,\infty).
    \end{cases}
\end{align}
\end{lemma}
\begin{proof}
Consider
\begin{align}\label{eq:swapped2}
    \sup_{\bP \in \cP(\mu,d)} \textup{OPT}(\bP) - p\bP(X\geq p) &= \sup_{\bP \in \cP(\mu,d)} \sup_t t\bP(X\geq t) - p\bP(X> p) \nonumber \\
    &= \sup_t\sup_{\bP \in \cP(\mu,d)} t\bP(X\geq t) - p\bP(X> p).
\end{align}
The first equality follows from $\sup_{\bP \in \cP(\mu,d)} \textup{OPT}(\bP) - p\bP(X > p)$ being continuous in $p$, which we verify at the end, while the second equality holds by changing the maximization order. For the remainder of this proof, we consider $p$ and $t$ fixed and derive a primal-dual formulation of \eqref{eq:sup_ar2}, establishing matching bounds. 
    
Define $\mathcal{M}$ as the set of all probability measures defined on the measurable space $([0,\infty),\mathcal{B})$ with $\mathcal{B}$ the Borel sigma-algebra on $\R_+$. Then, for fixed $p$ and $t$,
\begin{equation}\label{primalsup2}
\begin{aligned}
&\! \sup_{\mathbb{P}\in \mathcal{M}} &  &\int_x t\1_{\{x\geq t\}}-p\1_{\{x> p\}}{\rm d} \mathbb{P}(x),\\
&\text{s.t.} &      &  \int_x {\rm d}\mathbb{P}(x)=1, \ \int_x x{\rm d}\mathbb{P}(x)=\mu,\ \int_x |x-\mu|{\rm d}\mathbb{P}(x)=d.
\end{aligned}
\end{equation}
Consider the corresponding dual, 
\begin{equation}\label{dualsup2}
\begin{aligned}
&\inf_{\lambda_0,\lambda_1,\lambda_2 \in \R}\quad
 \lambda_0 + \lambda_1 \mu + \lambda_2 d,\\
&\text{s.t.}\quad
 G(x) := t\1_{\{x\geq t\}}-p\1_{\{x> p\}}
 \leq \lambda_0 + \lambda_1 x + \lambda_2 |x-\mu| =: F(x),
\quad \forall x\in[0,\infty).
\end{aligned}
\end{equation}
The goal is to construct matching primal and dual pairs for each $(p,t) \in (0,\infty)\times(0,\infty)$. To this end, we partition $(0,\infty) \times (0,\infty)$ into a finite set of scenarios, each of which is analyzed independently. Within Scenario $i$, we denote the primal and dual objectives by $r^p_i(\bP,p,t)$ and $r^d_i(\lambda,p,t)$, respectively. We refer to the distributions in Appendix \ref{ap:pdps} when specifying the primal solutions. In case the primal and dual solutions match, we write $r_i(p,t) := r^d_i(\lambda,p,t) = r^p_i(\bP,p,t)$. We then maximize over $t$ and write $r^*_i(p) = \sup_t r_i(p,t)$. Finally, we obtain \eqref{eq:swapped2} by solving $\max_i r^*_i(p)$.

\begin{figure}[h!]
  \centering
  \begin{subfigure}[b]{0.3\linewidth}
    \centering
    \includegraphics[width=\linewidth]{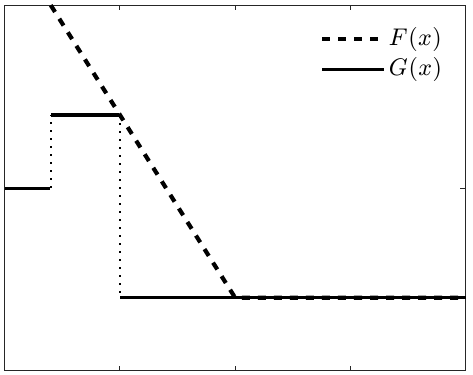}
    \caption{$1a$: $p \in (0,\tau_1),t \in (0,p]$.}
  \end{subfigure}\hfill
  \begin{subfigure}[b]{0.3\linewidth}
    \centering
    \includegraphics[width=\linewidth]{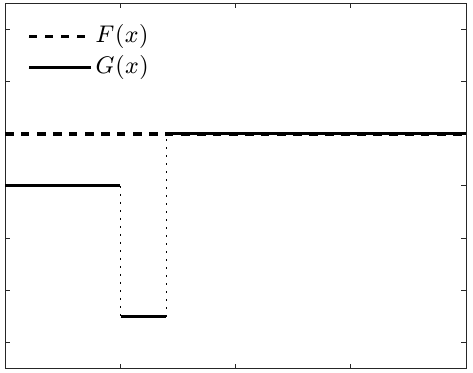}
    \caption{1b: $p \in (0,\tau_1),t \in (p,\tau_1]$.}
  \end{subfigure}\hfill
  \begin{subfigure}[b]{0.3\linewidth}
    \centering
    \includegraphics[width=\linewidth]{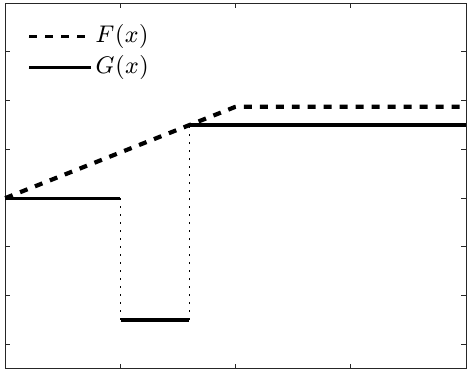}
    \caption{1c: $p \in (0,\tau_1),t \in (\tau_1,\mu]$.}
  \end{subfigure}

  \vspace{1ex}

  \begin{subfigure}[b]{0.3\linewidth}
    \centering
    \includegraphics[width=\linewidth]{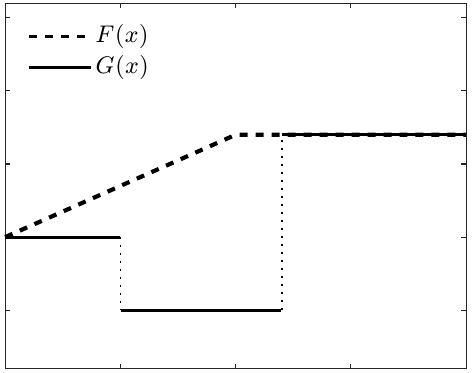}
    \caption{1d: $p \in (0,\tau_1),t \in (\mu,\tau_2]$.}
  \end{subfigure}\hfill
  \begin{subfigure}[b]{0.3\linewidth}
    \centering
    \includegraphics[width=\linewidth]{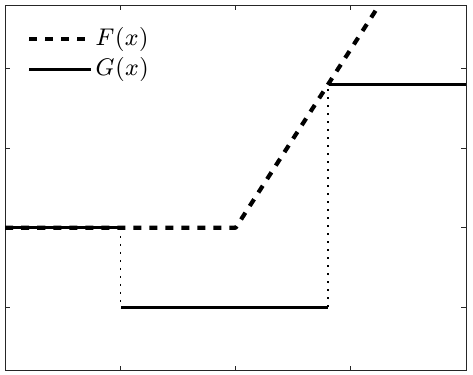}
    \caption{1e: $p \in (0,\tau_1),t \in (\tau_2,\bar{\tau}_2(p)]$.}
  \end{subfigure}\hfill
  \begin{subfigure}[b]{0.3\linewidth}
    \centering
    \includegraphics[width=\linewidth]{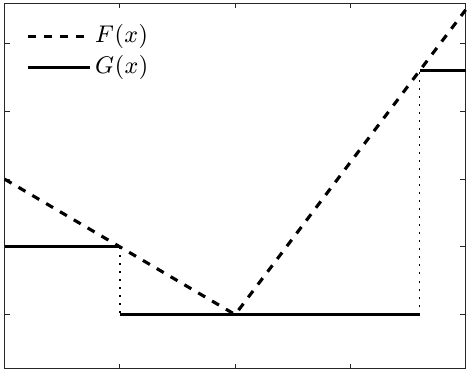}
    \caption{1f: $p \in (0,\tau_1), t \in (\bar{\tau}_2(p),\infty)$.}
  \end{subfigure}

  \vspace{1ex}

  \begin{subfigure}[b]{0.3\linewidth}
    \centering
    \includegraphics[width=\linewidth]{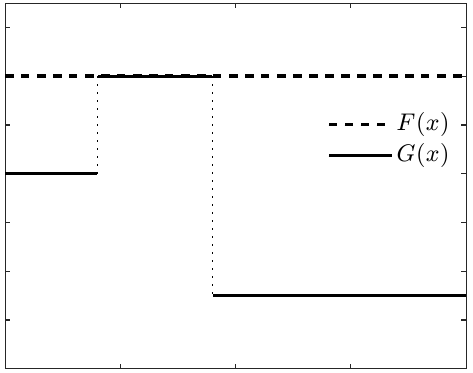}
    \caption{2a: $p \in [\tau_1,\mu),t \in (0,\tau_1]$.}
  \end{subfigure}\hfill
  \begin{subfigure}[b]{0.3\linewidth}
    \centering
    \includegraphics[width=\linewidth]{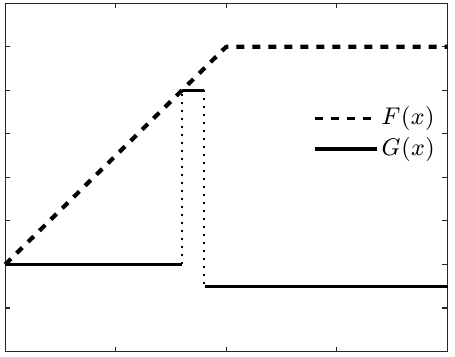}
    \caption{2b: $p \in [\tau_1,\mu),t \in (\tau_1,p]$.}
  \end{subfigure}\hfill
  \begin{subfigure}[b]{0.3\linewidth}
    \centering
    \includegraphics[width=\linewidth]{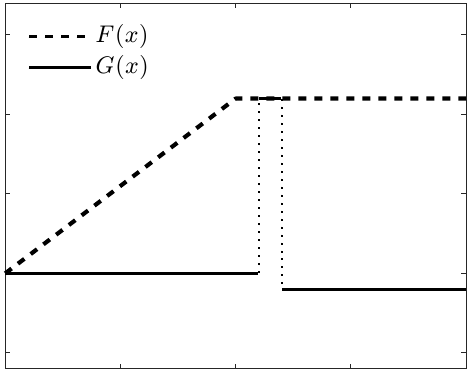}
    \caption{3c: $p \in [\mu,\tau_2),t \in (\mu,p]$.}
  \end{subfigure}

  \caption{Illustration of all nine distinct scenarios with $\mu = 1$ and $d=0.5$.}
  \label{fig:case1}
\end{figure}

\textbf{Case 1:} $p \in (0,\tau_1)$. First, consider $t \in (0,p]$, which we call Scenario 1a. Consider $F(p) = \lambda_0 + \lambda_1 p + \lambda_2 (\mu-p) = t$, $F(\mu)=\lambda_0 + \lambda_1 \mu=t-p$, and $F'_{+}(\mu)=\lambda_1+\lambda_2=0$.

\noindent This results in
\begin{align*}
    \lambda_0 = \frac{p(2t+\mu)-2\mu t - 2p^2}{2(p-\mu)}, \ \ \lambda_1 = -\frac{p}{2(\mu-p)}, \ \ \lambda_2 = \frac{p}{2(\mu-p)}
\end{align*}
with corresponding dual value $r^d_{1a}(\lambda,p,t) = t - p(1-\frac{d}{2(\mu-p)})$. Now consider $\bP = \bP^x_{2}(p)$ and notice that this results in a matching primal objective $r^p_{1a}(\bP,p,t)= t - p(1-v_1) = r^d_{1a}(\lambda,p,t)$. Clearly, $r_{1a}(p,t)$ is increasing in $t$. Hence, we only need to consider $t = p$ with $$r_{1a}^*(p) = \frac{dp}{2(\mu-p)}.$$

Second, consider $t \in (p,\tau_1]$, which we call Scenario 1b. Consider $F(t)=\lambda_0+\lambda_1 t + \lambda_2 (\mu-t) = t-p$, $F(\mu)=\lambda_0+\lambda_1 \mu=t-p$, and $F'_-(\mu)=\lambda_1-\lambda_2=0$. This results in 
\begin{align*}
    \lambda_0 = t-p, \ \ \lambda_1 = 0, \ \ \lambda_2 = 0
\end{align*}
with corresponding dual value $r^d_{1b}(\lambda,p,t) = t-p$. Now, when $t\in (p,\tau_1)$, consider $\bP = \bP^x_{2}(\tau_1-\epsilon)$ with $\epsilon \in (0,\tau_1-t)$. Then $\bP(X\geq t)=\bP(X>p) = 1$, and hence $r^p_{1b}(\bP,p,t) = t-p = r^d_{1b}(\lambda,p,t)$. Moreover, choosing $t=\tau_1-\epsilon$ and letting $\epsilon\downarrow{}0$ yields $t-p \uparrow \tau_1-p$, and thus $\sup_{t\in(p,\tau_1]} r_{1b}(p,t)=\tau_1-p$. Therefore, we obtain 
$$r_{1b}^*(p) = \mu - \frac{d}{2}-p.$$

Third, consider $t \in (\tau_1,\mu]$, which we call Scenario 1c. Consider $F(0) = \lambda_0+\lambda_2 \mu = 0$, $F(t) = \lambda_0+\lambda_1 t + \lambda_2(\mu-t)=t-p$, and $F'_+(\mu) = \lambda_1 + \lambda_2=0$. This results in
\begin{align*}
    \lambda_0 = \frac{\mu(t-p)}{2t}, \ \ \lambda_1 = \frac{t-p}{2t}, \ \ \lambda_2 = -\frac{t-p}{2t}
\end{align*}
with corresponding dual value $r^d_{1c}(\lambda,p,t) = \frac{(t-p)(2\mu-d)}{2t}$. Now consider $\bP = \bP^-_{3}(0,t,z)$ and let $z \xrightarrow{} \infty$ and notice that $\lim_{z\xrightarrow{}\infty}r^p_{1c}(\bP,p,t)= \lim_{z\xrightarrow{}\infty}(t-p)(w_2^-+w_3^-) = \frac{(t-p)(2\mu-d)}{2t} = r^d_{1c}(\lambda,p,t)$. Henceforth, we write $\bP=\bP_3^-(0,t,\infty)$ (or $\bP=\bP_3^+(0,t,\infty)$) as shorthand for using such limiting sequences. By taking the derivative of $r_{1c}(p,t)$ with respect to $t$ we find $\frac{\partial r_{1c}(p,t)}{\partial t} = \frac{p(2\mu-d)}{2t^2} \geq 0$. Hence, we only need to consider $t = \mu$ with objective value $$r^*_{1c}(p) = (\mu-p)\frac{2\mu-d}{2\mu}.$$

Fourth, consider $t \in (\mu,\tau_2]$, which we call Scenario 1d. Consider $F(0) = \lambda_0 + \lambda_2\mu = 0$, $F(t)=\lambda_0 + \lambda_1 t+\lambda_2 (t-\mu)=t-p$, and $F'_{+}(\mu)=\lambda_1+\lambda_2=0$. This results in
\begin{align*}
    \lambda_0 = \frac{t-p}{2}, \ \ \lambda_1 = \frac{t-p}{2\mu}, \ \ \lambda_2 = -\frac{t-p}{2\mu}
\end{align*}
with corresponding dual value $r^d_{1d}(\lambda,p,t) = \frac{(2\mu-d)(t-p)}{2\mu}$. Now consider $\bP = \bP^x_{2}(0)$ and notice that this results in a matching primal objective $r^p_{1d}(\bP,p,t)= (t-p)v_2 = r^d_{1d}(\lambda,p,t)$. By taking the derivative of $r_{1d}(p,t)$ with respect to $t$ we find $\frac{\partial r_{1d}(p,t)}{\partial t} = 1 - \frac{d}{2\mu} \geq 0$. Hence, we only need to consider $t = \tau_2$ with objective value $$r^*_{1d}(p) = \mu - p + \frac{dp}{2\mu}.$$

Fifth, consider $t \in (\tau_2,\bar{\tau}_2(p)]$ with $\bar{\tau}_2(p) = \mu+\frac{d(\mu-p)}{2(\mu-p)-d}$, which we call Scenario 1e. Consider $F(\mu)=\lambda_0+\lambda_1 \mu =0$, $F(t)=\lambda_0+\lambda_1 t + \lambda_2 (t-\mu)=t-p$, and $F_{-}'(\mu)=\lambda_1-\lambda_2=0$. This results in
\begin{align*}
    \lambda_0 = -\frac{\mu(t-p)}{2(t-\mu)}, \ \ \lambda_1 = \frac{t-p}{2(t-\mu)}, \ \ \lambda_2 = \frac{t-p}{2(t-\mu)}
\end{align*}
with corresponding dual value $r^d_{1e}(\lambda,p,t) = \frac{d(t-p)}{2(t-\mu)}$. Now consider $\bP = \bP^y_{2}(t)$ and notice that this results in a matching primal objective $r^p_{1e}(\bP,p,t)= (t-p)v_2 = r^d_{1e}(\lambda,p,t)$. By taking the derivative of $r_{1e}(p,t)$ with respect to $t$ we find $\frac{\partial r_{1e}(p,t)}{\partial t} = -\frac{d(\mu-p)}{2(t-\mu)^2}\leq 0$. Therefore, we only need to consider $t = \tau_2^+$ with objective value $$r^*_{1e}(p) = \mu - p + \frac{dp}{2\mu}.$$

Sixth, consider $t \in (\bar{\tau}_2(p),\infty)$, which we call Scenario 1f. Consider $F(p) = \lambda_0 + \lambda_1 p + \lambda_2 (\mu-p) = 0$, $F(\mu)=\lambda_0 + \lambda_1 \mu =-p$, and $F(t)=\lambda_0 + \lambda_1 t + \lambda_2 (t-\mu)=t-p$. This results in
\begin{align*}
    \lambda_0 = \frac{2p^2(t-\mu)+\mu^2p - \mu^2 t}{2(p-\mu)(\mu-t)}, \ \ \lambda_1 = \frac{p(\mu-2t)+\mu t}{2(p-\mu)(\mu-t)}, \ \ \lambda_2 = \frac{\mu(t-p)}{2(p-\mu)(\mu-t)}
\end{align*}
with corresponding dual value $r^d_{1f}(\lambda,p,t) = \frac{dt}{2 \left(t - {\mu}\right)} - p \left(1 - \frac{d}{2 \left({\mu} - p\right)}\right)$. Now consider $\bP = \bP^+_{3}(p,\mu,t)$ and notice that this results in a matching primal objective $r^p_{1f}(\bP,p,t)= tw_3^+-p(w_2^++w_3^+) = r^d_{1f}(\lambda,p,t)$. By taking the derivative of $r_{1f}(p,t)$ with respect to $t$ we obtain $\frac{\partial r_{1f}(p,t)}{\partial t} = -\frac{d\mu}{2(t-\mu)^2} \leq 0$. Hence, we only need to consider $t = \bar{\tau}_2(p)$ with objective value $$r^*_{1f}(p) = \mu - p.$$

Finally, maximizing over all scenarios gives 
\begin{align*}
\sup_t\sup_{\bP \in \cP(\mu,d)} t\bP(X\geq t) - p\bP(X> p) &= \max\{r^*_{1a}(p),r^*_{1b}(p),r^*_{1c}(p),r^*_{1d}(p),r^*_{1e}(p),r^*_{1f}(p)\}\\
&=\max\{r^*_{1a}(p),r^*_{1d}(p)\}\\
&= \max\left\{\frac{dp}{2(\mu-p)},\mu-p+\frac{dp}{2\mu}\right\},
\end{align*}
since $r^*_{1b}(p) = \mu-\frac{d}{2}-p \leq \mu-p + \frac{dp}{2\mu} = r^*_{1d}(p)$, $r^*_{1c}(p) = (\mu-p)\frac{2\mu-d}{2\mu} \leq \mu-p + \frac{dp}{2\mu} = r^*_{1d}(p)$, $r^*_{1e}(p) = r^*_{1d}(p)$, and $r^*_{1f}(p) = r^*_{1d}(p)$.
\medskip

\textbf{Case 2: $p \in [\tau_1,\mu)$.} First, consider $t \in (0,\tau_1]$, which we call Scenario 2a. Consider $F(0) = \lambda_0 + \lambda_2\mu = t$, $F'_{-}(\mu)=\lambda_1-\lambda_2=0$, and  $F'_{+}(\mu)=\lambda_1+\lambda_2=0$. This results in
\begin{align*}
    \lambda_0 = t, \ \ \lambda_1 = 0, \ \ \lambda_2 = 0
\end{align*}
with corresponding dual value $r^d_{2a}(\lambda,p,t) = t$. Now, when $t \in (0,\tau_1)$, consider $\bP = \bP^x_{2}(\tau_1-\epsilon)$ with $\epsilon \in (0,\tau_1-t)$. Then, letting $\epsilon \downarrow{} 0$ results in $\lim_{\epsilon\downarrow{}0 }t\bP(X\geq t) -p\bP(X>p) = t-p\lim_{\epsilon \downarrow{} 0}v_2=t$, and hence $\lim_{\epsilon \downarrow{}0}r^p_{2a}(\bP,p,t) = t = r^d_{2a}(\lambda,p,t)$. Moreover, choosing $t=\tau_1-\epsilon$ and letting $\epsilon\downarrow{}0$ yields $t \uparrow \tau_1$, and thus $\sup_{t\in(0,\tau_1]} r_{2a}(p,t)=\tau_1$. Hence, $$r_{2a}^*(p) = \mu - \frac{d}{2}.$$

Second, consider $t \in (\tau_1,p]$, which we call Scenario 2b. Consider $F(0) = \lambda_0 + \lambda_2 \mu = 0$, $F(t)=\lambda_0 + \lambda_1 t + \lambda_2(\mu-t)=t$, and $F'_+(\mu)=\lambda_1 +\lambda_2=0$. This results in
\begin{align*}
    \lambda_0 = \frac{\mu}{2}, \ \ \lambda_1 = \frac{1}{2}, \ \ \lambda_2 = -\frac{1}{2}
\end{align*}
with corresponding dual value $r^d_{2b}(\lambda,p,t) = \mu-\frac{d}{2}$. Now consider $\bP=\bP_3^-(0,t,\infty)$ and notice that this results in a matching primal objective $r^p_{2b}(\bP,p,t)= t(w^-_2+w^-_3) - pw^-_3 = \mu - \frac{d}{2} = r^d_{2b}(\lambda,p,t)$. Since $r_{2b}(p,t)$ is constant in $t$, we get $$r_{2b}^*(p) = \mu-\frac{d}{2}.$$

We now turn to three scenarios that, due to equivalence to earlier cases, can be addressed succinctly. For $t \in (p,\mu]$, denoted as Scenario 2c, the structure is equivalent to Scenario 1c with the added restriction that $t > p$. However, since $r_{1c}(p,t)$ is increasing in $t$, this restriction does not affect the optimal value, and we obtain $r^*_{2c}(p) = r^*_{1c}(p) = (\mu-p)\frac{2\mu-d}{2\mu}$. For $t \in (\mu,\tau_2]$, denoted by Scenario 2d, the structure is equivalent to Scenario 1d, and hence, $r^*_{2d}(p) = r^*_{1d}(p) = \mu-p+\frac{dp}{2\mu}$. For $t \in (\tau_2,\infty)$, denoted by Scenario 2e, the structure is equivalent to Scenario 1e. Since $p\geq\mu-d/2$, the distribution $\bP^y_2(t)$ is always feasible. Hence, $r^*_{2e}(p) = r^*_{1e}(p) = \mu - p + \frac{dp}{2\mu}$.

Finally, maximizing over all scenarios gives 
\begin{align*}
\sup_t\sup_{\bP \in \cP(\mu,d)} t\bP(X\geq t) - p\bP(X> p) &= \max\{r^*_{2a}(p),r^*_{2b}(p),r^*_{2c}(p),r^*_{2d}(p),r^*_{2e}(p)\}\\
&=\max\{r^*_{2b}(p),r^*_{2d}(p)\}\\
&=\max\left\{\mu-\frac{d}{2},\mu-p+\frac{dp}{2\mu}\right\},
\end{align*}
since $r^*_{2a}(p) = r^*_{2b}(p)$, $r^*_{2c}(p) = (\mu-p)\frac{2\mu-d}{2\mu} \leq \mu-p+\frac{dp}{2\mu} = r^*_{2d}(p)$, and $r^*_{1e}(p) = r^*_{1d}(p)$.

\medskip
\textbf{Case 3: $p \in [\mu,\tau_2)$.} The first two scenarios are identical to previous scenarios. For $t\in(0,\tau_1]$, which we call Scenario 3a, is equivalent to Scenario 2a. Hence, $r^*_{3a}(p) = r^*_{2a}(p) = \mu - d/2$. Similarly, for $t \in (\tau_1,\mu]$, which we call Scenario 3b, is equivalent to Scenario 2b, albeit over the larger interval $(\tau_1,\mu]$. However, since $r^d_{2b}(\lambda,p,t)$ is constant in $t$, this larger interval does not matter and we obtain $r^*_{3b}(p) = r^*_{2b}(p) = \mu- d/2$.

Next, consider $t \in (\mu,p]$, which we call Scenario 3c. Consider $F(0) = \lambda_0 + \lambda_2 \mu = 0$, $F(p)=\lambda_0 + \lambda_1 p + \lambda_2(p-\mu)=t$, and $F'_+(\mu)=\lambda_1 +\lambda_2=0$. This results in
\begin{align*}
    \lambda_0 = \frac{t}{2}, \ \ \lambda_1 = \frac{t}{2\mu}, \ \ \lambda_2 = -\frac{t}{2\mu}
\end{align*}
with corresponding dual value $r^d_{3c}(\lambda,p,t) = t(1-\frac{d}{2\mu})$. Now consider $\bP=\bP_3^+(0,p,\infty)$ and notice that this results in a matching primal objective $r^p_{3c}(\bP,p,t)=t (w^+_2+w^+_3)-p w^+_3 = r^d_{3c}(\lambda,p,t)$. Clearly, $r_{3c}(p,t)$ is increasing in $t$ and hence, we only need to consider $t = p$ with $$r_{3c}^*(p) = \left(\mu-\frac{d}{2}\right)\frac{p}{\mu}.$$

We now turn to two scenarios that, due to equivalence to earlier cases, can be addressed succinctly. For $t \in (p,\tau_2]$, denoted by Scenario 3d, the structure is equivalent to Scenario 1d with the added restriction that $t > p$. However, since $r_{1d}(p,t)$ is increasing in $t$, this restriction does not affect the optimal value, and we obtain $r^*_{3d}(p) = r^*_{1d}(p) = \mu - p + \frac{dp}{2\mu}$. For $t \in (\tau_2,\infty)$, denoted by Scenario 3e, the structure is equivalent to Scenario 2e. Hence, $r^*_{3e}(p) = r^*_{2e}(p) = \mu - p + \frac{dp}{2\mu}$.

Finally, maximizing over all scenarios gives 
\begin{align*}
\sup_t\sup_{\bP \in \cP(\mu,d)} t\bP(X\geq t) - p\bP(X> p) &= \max\{r^*_{3a}(p),r^*_{3b}(p),r^*_{3c}(p),r^*_{3d}(p),r^*_{3e}(p)\}\\
&=\max\{r^*_{3c}(p),r^*_{3d}(p)\}\\
&=\max\left\{\left(\mu-\frac{d}{2}\right)\frac{p}{\mu},\mu-p+\frac{dp}{2\mu}\right\},
\end{align*}
as $r^*_{3a}(p) = r^*_{3b}(p) \leq r^*_{3c}(p)$ and $r^*_{3d}(p) = r^*_{3e}(p)$.

\medskip
\textbf{Case 4: $p \in [\tau_2,\infty)$.} Consider $\bP = \bP^x_2(0)$. Consequently, $p\bP(X> p) = 0$. Simultaneously, by selecting $t = \tau_2$ we obtain
$\tau_2\bP(X\geq \tau_2)=\left(\mu + \frac{d\mu}{2\mu-d}\right)\left(1-\frac{d}{2\mu}\right) = \mu$. Hence, since $\mu$ is a trivial upper bound of absolute regret, we conclude that $$\sup_t\sup_{\bP\in\cP(\mu,d)}t\bP(X\geq t) - p\bP(X>p) = \mu.$$ Finally, for $p \in (0,\infty)$, we remark that $\sup_t\sup_{\bP \in \cP(\mu,d)} t\bP(X\geq t) - p\bP(X> p)$ is continuous in $p$. This completes the proof.
\end{proof}

\subsection{Outer minimization problem}
Consider Lemma \ref{lem:sup_ar}. Clearly, $p \leq \tau_2$, as otherwise the regret is maximal. For notational convenience, we introduce two functions corresponding to the left-hand and right-hand branches of the maximal regret. Specifically, define
    \begin{align*}
        f_l(p) = \begin{cases}
            \frac{dp}{2(\mu-p)}, \quad& p \in (0,\tau_1],\\
           \mu-\frac{d}{2}, \quad& p \in [\tau_1,\mu],\\
            \left(\mu-\frac{d}{2}\right)\frac{p}{\mu}, \quad& p \in [\mu,\tau_2],
        \end{cases}
    \end{align*}
    and $f_r(p) = \mu - p + \frac{dp}{2\mu}$. Notice that $f_l(p)$ is increasing and continuous in $p$, satisfying $f_l(0) = 0$ and $f_l(\tau_2)=\mu$. Meanwhile, $f_r(p)$ is decreasing and continuous in $p$, satisfying $f_r(0) = \mu$ and $f_r(\tau_2) = 0$. Consequently, 
    $p^*_{ar}$ is determined by the condition $f_l(p^*_{ar}) = f_r(p^*_{ar})$. A straightforward analysis then yields \eqref{eq:regret_price} with the caveat that when $d \in [(3-\sqrt{5})\mu,\mu]$, every $p \in [\mu(\frac{d}{2\mu-d}),\mu]$ is optimal.

\subsection{Single-item absolute regret with MAD upper bound}\label{ap:regret_m1_ub}
When we replace $d$ by an upper bound of the MAD $\bar{d}$, it slightly changes Proposition \ref{prop:regret_m1}, as the optimal price is not decreasing in $d$. Therefore, we will now derive this result with $d$ replaced by an upper bound $\bar{d}$ and show that this doesn't impact Proposition \ref{th:3p_m1}. Formally, we now consider
\begin{align}
\mathcal{P}(\mu,\bar{d}) = \{\mathbb{P}: \mathbb{E}_{\mathbb{P}}[X] = \mu, \, \mathbb{E}_{\mathbb{P}}[|X - \mu|] \leq \bar{d}, \text{ and } X \in [0,\infty)\}.
\label{eq:upperbound_mad}
\end{align}

We start with solving the inner maximization problem. We will prove this by changing the proof of \ref{lem:sup_ar} at certain places.
\begin{lemma}\label{lem:sup_ar2}
    Let $\mu > 0$, $0 \leq \bar{d} < 2\mu$, and $p \in (0,\infty)$. Then
    \begin{align}
    \sup_{\bP\in\cP(\mu,\bar{d})}\textup{OPT}(\bP)-\textup{REV}(p,\bP) = \begin{cases}
        \max\left\{\frac{\bar{d}p}{2(\mu-p)},\mu-p+\frac{\bar{d}p}{2\mu}\right\}, \quad& p \in [0,\mu-\bar{d}/2], \\
        \max\left\{p,\mu-p+\frac{\bar{d}p}{2\mu}\right\}, \quad& p \in [\mu-\bar{d}/2,\mu], \\
        \mu, \quad& p \in [\mu,\infty).
    \end{cases}
\end{align}
\end{lemma}
\begin{proof}
The proof of Proposition \ref{prop:regret_m1} has four case distinctions: Case 1 with $p \in (0,\mu-d/2)$, case 2 with $p \in [\mu-d/2,\mu)$, case 3 with $p \in [\mu,\mu+d\mu/(2\mu-d))$, and case 4 with $p \in [\mu+d\mu/(2\mu-d),\infty)$.
    
Clearly, case 4 remains unchanged, as $\mu$ is still an upper bound of the absolute regret and we are maximizing over a larger set. However, Nature can now accomplish this for case 3 as well by selecting the degenerate distribution $\bP_{\mu}$ with all mass on $\mu$ so that $p\bP_{\mu}(X>p)=0$ while $\textup{OPT}(\bP_{\mu})=\sup_{t}t\bP_{\mu}(X\geq t)=\mu\bP_{\mu}(X\geq\mu)=\mu$. Furthermore, case 1 also remains unchanged, as the optimal price in Proposition \ref{prop:regret_m1} is still decreasing in $d$ when $p^*_{ar} < \mu-d/2$. Hence, one can replace $d$ by $\bar{d}$ without loss of generality in case 1.

We now turn to case 2, which requires a more careful analysis. The following class of two-point distributions is used for this:
\begin{align}\label{eq:ubd}
\bar{\bP}_2(x,y) = \left\{ \begin{array}{ll}
x & \text{ w.p. } v_1 = \frac{y-\mu}{y-x} \\
y & \text{ w.p. } v_2 = \frac{\mu-x}{y-x} 
\end{array}\right.
\end{align}
with $y \in [\mu,\frac{2\mu(\mu-x)-\bar{d}x}{2(\mu-x)-\bar{d}}]$ when $x \in [0,\mu-\bar{d}/2)$ and $y \in [\mu,\infty)$ when $x \in [\mu-\bar{d}/2,\mu]$. Notice that \eqref{eq:ubd} differs from \eqref{def:2pd} due to the MAD being an upper bound. 

We start by formulating the new primal problem:
\begin{equation}
\begin{aligned}
&\! \sup_{\mathbb{P}\in \mathcal{M}} &  &\int_x t\1_{\{x\geq t\}}-p\1_{\{x> p\}}{\rm d} \mathbb{P}(x),\\
&\text{s.t.} &      &  \int_x {\rm d}\mathbb{P}(x)=1, \ \int_x x{\rm d}\mathbb{P}(x)=\mu,\ \int_x |x-\mu|{\rm d}\mathbb{P}(x)\leq\bar{d}.
\end{aligned}
\end{equation}
This changes the corresponding dual into
\begin{equation}
\begin{aligned}
&\inf_{\lambda_0,\lambda_1 \in \R, \lambda_2\in \R_+}\quad
 \lambda_0 + \lambda_1 \mu + \lambda_2 \bar{d},\\
&\text{s.t.}\quad
 G(x) := t\1_{\{x\geq t\}}-p\1_{\{x> p\}}
 \leq \lambda_0 + \lambda_1 x + \lambda_2 |x-\mu| =: F(x),
\quad \forall x\in[0,\infty).
\end{aligned}
\end{equation}
Hence, scenarios where $\lambda_2 < 0$ are no longer dual feasible. First, consider $t \in (0,p]$, which we call Scenario 2a. Consider dual solution $\lambda_0=t$, $\lambda_1=0$, and $\lambda_2=0$. Furthermore, consider primal solution $\bP=\bar{\bP}_2(p,y)$ and let $y\xrightarrow{} \infty$. This results in $r^d_{2a}(\lambda,p,t) = t=\lim_{y\xrightarrow{}\infty}r^p_{2a}(\bP,p,t)$. Clearly, $r_{2a}(p,t)=t$ is increasing in $t$, so we only need to consider $r^*_{2a}(p)=p$.

Second, consider $t \in (p,\mu]$, which we call Scenario 2b. Consider dual solution $\lambda_0=t-p$, $\lambda_1=0$, and $\lambda_2=0$. Furthermore, consider primal solution $\bP=\bP_{\mu}$, i.e., the degenerate distribution with all probability mass on $\mu$. This results in $r^d_{2b}(\lambda,p,t)=t-p=r^p_{2b}(\bP,p,t)$. Clearly, $r_{2b}(p,t)=t-p$ is increasing in $t$, so we only need to consider $r^*_{2b}(p)=\mu-p$.

Third, consider $t \in (\mu,\mu+(\bar{d}\mu)/(2\mu-\bar{d})]$, which we call Scenario 2c. Consider dual solution $\lambda_0=0$, $\lambda_1=\frac{t-p}{2(t-\mu)}$, and $\lambda_2=0$. Furthermore, consider primal solution $\bar{\bP}_2(0,t)$. This results in $r^d_{2c}(\lambda,p,t)=(t-p)\frac{\mu}{t}=r^p_{2c}(\bP,p,t)$. Clearly, $r_{2c}(p,t)=(t-p)\frac{\mu}{t}$ is increasing in $t$, so we only need to consider $r^*_{2c}(p)=\mu-p+\frac{\bar{d}p}{2\mu}$.

Fourth, consider $t \in (\mu+(\bar{d}\mu)/(2\mu-\bar{d}),\infty)$, which we call Scenario 2d. This is equivalent to Scenario 2e in the original proof, as $\lambda_2 = \frac{t-p}{2(t-\mu)}\geq 0$. Hence, $r^*_{2d}(p)=\mu-p+\frac{\bar{d}p}{2\mu}$.

Finally, maximizing over all scenarios gives 
\begin{align*}
\sup_t\sup_{\bP \in \cP(\mu,\bar{d})} t\bP(X\geq t) - p\bP(X> p) &= \max\{r^*_{2a}(p),r^*_{2b}(p),r^*_{2c}(p),r^*_{2d}(p)\}\\
&=\max\{r^*_{2a}(p),r^*_{2c}(p)\},
\end{align*}
since $r^*_{2b}(p) = \mu-p \leq \mu-p+\frac{\bar{d}p}{2\mu} = r^*_{2c}(p)$, and $r^*_{2d}(p) = r^*_{2c}(p)$.
\end{proof}

We will now solve the outer minimization problem. 
\begin{proposition}[Single-item absolute regret with MAD upper bound]
Consider $0\leq \bar{d} < 2\mu$. The optimal absolute regret of selling one good for a deterministic price is achieved by price
    \begin{align}\label{eq:regret_price2}
    \bar{p}^*_{ar}:=\arg\inf_p\sup_{\bP\in\cP(\mu,\bar{d})}\textup{OPT}(\bP)-\textup{REV}(p,\bP)=\begin{cases}
         \mu\left(\frac{2\mu-\sqrt{2\mu \bar{d}}}{2\mu-\bar{d}}\right), \quad& \bar{d} \in [0,(3-\sqrt{5})\mu] \\
        \mu\left(\frac{2\mu}{4\mu-\bar{d}}\right), \quad& \bar{d} \in [(3-\sqrt{5})\mu,2\mu]
    \end{cases}.
\end{align}
\end{proposition}
\begin{proof}
    Consider Lemma \ref{lem:sup_ar2}. Clearly, $p \leq \mu$, as otherwise the regret is maximal. For notational convenience, we introduce two functions corresponding to the left-hand and right-hand branches of the maximal regret. Specifically, define
    \begin{align*}
        f_l(p) = \begin{cases}
            \frac{\bar{d}p}{2(\mu-p)}, \quad& p \in (0,\mu-\bar{d}/2],\\
            p, \quad& p \in [\mu-\bar{d}/2,\mu],
        \end{cases}
    \end{align*}
    and $f_r(p) = \mu - p + \frac{\bar{d}p}{2\mu}$ with $p \in (0,\mu]$. Notice that $f_l(p)$ is increasing and continuous in $p$, satisfying $f_l(0^+) = 0$ and $f_l(\mu)=\mu$. Meanwhile, $f_r(p)$ is decreasing and continuous in $p$, satisfying $f_r(0^+) = \mu$ and $f_r(\mu) = \bar{d}/2$. Consequently, $f_l(0^+)=0<\mu=f_r(0^+)$ and $f_l(\mu)=\mu>\bar{d}/2=f_r(\mu)$. Hence, $\bar{p}^*_{ar}$ is determined by the unique intersection $f_l(\bar{p}^*_{ar}) = f_r(\bar{p}^*_{ar})$. Solving for $\bar{p}^*_{ar}$ then yields \eqref{eq:regret_price2}.
\end{proof}

Clearly, $\bar{p}^*_{ar} > p^*_{cr}$, so Proposition \ref{th:3p_m1} remains valid when the MAD is replaced by an upper bound.
\end{document}